\let\OLDthebibliography\thebibliography
\renewcommand\thebibliography[1]{
  \OLDthebibliography{#1}
  \setlength{\parskip}{0pt}
  \setlength{\itemsep}{0pt plus 0.3ex}
}
\newcommand{\pgfl}{\mathtt{PGFL}}
\newcommand{\cdf}{\mathtt{CDF}}
\newcommand{\pmf}{\mathtt{pmf}}
\newcommand{\mgf}{\mathtt{mgf}}
\newcommand\numberthis{\addtocounter{equation}{1}\tag{\theequation}}
\newcommand{\x}{\mathbf{x}}
\newcommand{\z}{\mathbf{z}}
\def\nb0{{\mathbf{0}}}
\def\nb1{{\mathbf{1}}}
\def\ncalB{{\mathcal{B}}}
\def\ncalM{{\mathcal{M}}}
\def\ncalZ{{\mathcal{Z}}}
\newtheorem{lemma}{Lemma}
\newtheorem{definition}{Definition}
\newtheorem{theorem}{Theorem}
\newtheorem{cor}{Corollary}
\newtheorem{remark}{Remark}
\def\E{\mathbb{E}}
\def\P{\mathbb{P}}
\def\p{p}
\def\Z{\mathbb{Z}}
\def\sir{\mathtt{SIR}}
\def\cdf{\mathtt{CDF}}
\def\pmf{\mathtt{pmf}}
\begin{document}
\graphicspath{{./Figures/}}
\title{Spatial Distribution of the Mean Peak Age of Information in Wireless Networks}
\author{
Praful D. Mankar, Mohamed A. Abd-Elmagid, and Harpreet S. Dhillon
\thanks{P. D. Mankar is with SPCRC, IIIT Hyderabad, India (Email: praful.mankar@iiit.ac.in). M. A. Abd-Elmagid and H. S. Dhillon are with Wireless@VT, Department of ECE, Virginia Tech, Blacksburg, VA (Email: \{maelaziz,\ hdhillon\}@vt.edu). The support of the U.S. NSF (Grant CPS-1739642) is gratefully acknowledged. {This paper is submitted in part to IEEE ICC 2021 \cite{Praful_ICC2}.} 
}
}
\maketitle
\begin{abstract}
This paper considers a large-scale wireless network consisting of source-destination (SD) pairs, where the sources send time-sensitive information, termed {\em status updates}, to their corresponding destinations in a time-slotted fashion. We employ Age of information (AoI) for quantifying the freshness of the status updates  measured at the destination nodes  { under the preemptive and non-preemptive queueing disciplines with no storage facility. The non-preemptive queue  drops the newly arriving updates  until the update in service is successfully delivered, whereas the preemptive queue replaces the current update in service with the newly arriving update, if any. }  As the update delivery rate for a given link is a function of the interference field seen from the receiver, the temporal mean AoI can be treated as a random variable over space. Our goal in this paper is to characterize the spatial distribution of the mean AoI observed by the SD pairs by modeling them as a Poisson bipolar process.  Towards this objective, we first derive accurate bounds on the moments of success probability while efficiently capturing the interference-induced coupling  in the activities of the SD pairs.  Using this result, we then derive tight bounds on the moments as well as the spatial distribution of peak AoI (PAoI). Our numerical results verify our analytical findings and demonstrate the impact of various system design parameters on the mean PAoI.
\end{abstract}

\begin{IEEEkeywords}
Age of information, bipolar Poisson point process, stochastic geometry, and wireless networks.
\end{IEEEkeywords}
\IEEEpeerreviewmaketitle
\section{Introduction}
With the emergence of Internet of Things (IoT), wireless networks are expected to provide a reliable platform for enabling real-time monitoring and control applications. Many such applications, such as the ones related to air pollution or soil moisture monitoring, involve a large-scale deployment of IoT sensors, which would acquire updates about some underlying random process and send them to the destination nodes (or monitoring stations). Naturally, accurate quantification of the freshness of status updates received at the destination nodes is essential in such applications. However, the traditional performance metrics of communication systems, like throughput and delay, are not suitable for this purpose since they do not account for the generation times of status updates. This has recently motivated the use of AoI to quantify the performance of
communication systems dealing with the transmission of time-sensitive information \cite{abd2018role}. This metric was first conceived in \cite{kaul2012real} for a simple queuing-theoretic model in which randomly generated update packets arrive at a source node according to a Poisson process, and then transmitted to a destination node using FCFS queuing discipline. In particular, AoI was defined in \cite{kaul2012real} as the time elapsed since the latest successfully received update packet at the destination node was generated at the source node. As evident from the definition, AoI is capable of quantifying how {\em fresh} the status updates are when they reach the destination node since it tracks the generation time  of each update packet.
As will be discussed next in detail, the analysis of AoI has mostly been limited to simple settings that ignore essential aspects of wireless networks, including temporal channel variations and random spatial distribution of nodes. This paper presents a novel spatio-temporal analysis of AoI in a wireless network by incorporating the effect of both the channel variations and the randomness in the wireless node locations to derive the spatial distribution of the temporal mean AoI.  

\subsection{Prior Art}
\label{sec:prior_art}
 For a point-to-point communication system, the authors of \cite{kaul2012real} characterized a closed-form expression for the average AoI. 
Subsequently,  the authors of \cite{yates2012real,costa2016age,kam2013age,Modiano2015,chen2016age,Basel,kosta2017age,javani2019age}  characterized the average AoI or similar age-related metrics (e.g., PAoI \cite{costa2016age,chen2016age,Basel} and Value of Information of Update \cite{kosta2017age}) for a variety of the queuing disciplines. In addition, the authors of  \cite{8820073,kosta2020non,8737474,9018244,yates2018age} presented the queuing theory-based analysis of  the distribution of AoI. The above works provide foundational understanding of temporal variations of AoI from the perspective of queuing theory for a point-to-point communication system. Inspired by these works, the AoI or similar age-related metrics have been used to characterize the performance  of real-time monitoring services in a variety of communication systems, including  broadcast networks \cite{kadota2016,hsu2019scheduling,bastopcu2020should}, multicast networks \cite{Buyukates_ulu,li2020age_a}, multi-hop networks \cite{talak2017,2_3}, multi-server information-update systems \cite{ABedewy2016}, IoT networks \cite{gu2019timely,abd2019tcom,zhou2018joint,AbdElmagid2019Globecom_a,Stamatakis_2020,AbdElmagid_joint}, cooperative device-to-device (D2D) communication networks \cite{buyukates2019age,li2020age_b}, unmanned aerial vehicle (UAV)-assisted networks \cite{abd2018average,AbdElmagid2019Globecom_b}, ultra-reliable low-latency vehicular networks \cite{abdel2018ultra}, and social networks \cite{bastopcu2019minimizing,altman2019forever,hargreaves2019often}.  All these studies mostly focus on minimizing the AoI with the following design objectives:  1) design of scheduling policies  \cite{kadota2016,hsu2019scheduling,bastopcu2020should,talak2017,AbdElmagid_joint}, 2) design of cooperative transmission policies  \cite{Buyukates_ulu,li2020age_a,talak2017,2_3,buyukates2019age,li2020age_b}, 3) design of the status update sampling policies  \cite{zhou2018joint,Stamatakis_2020,AbdElmagid2019Globecom_a,bastopcu2019minimizing}, and 4)  trade-off with other performances metrics in heterogeneous traffic/networks scenarios \cite{2_3,ABedewy2016,gu2019timely,Stamatakis_2020,altman2019forever}.  However, given the underlying tools used,   these works are not conducive to account for some important aspects of wireless networks, such as interference, channel variations, path-loss, and random network topologies.

{ Over the last decade, the stochastic geometry has emerged as a powerful tool for the analysis of the large scale random wireless networks while efficiently capturing the above propagation features. The interested readers, for examples, can refer to models and analyses presented  in  \cite{AndBacJ2011} for cellular networks,  \cite{DhiGanJ2012} for heterogeneous networks, and \cite{Baccelli_Aloha2006} for ad-hoc networks. }
{While these works were initially focused on the space-time mean performance of network  (such as {\em coverage probability}), recently a new performance metric is recently introduced, termed {\em meta distribution}, in \cite{Martin2016Meta} for characterizing the spatial variation in the temporal mean  performance measured at the randomly located nodes. The meta distribution has become an instrumental tool for analysing the spatial disparity in a variety of performance metrics (e.g., see \cite{Kalamkar_Rate_Reliability_Two_Facets,Zhong_SpatoiTemporal,emara2019spatiotemporal})  and network settings (e.g., see \cite{Chiranjib_MetaDis_PCP,Praful_NOMA}).
However, these stochastic geometry models lack to handle the traffic variations because of which they are mostly applicable to saturated networks. Therefore, it is important to develop a method/tool that is capable of handling in spatial randomness through interference and temporal variations due to traffic. But, in general,  the spatiotemporal analysis is known to be hard, for the reasons discussed next.}

{ The wireless links exhibit the space-time correlation through their interference-induced stochastic interactions \cite{Ganti_2009}. This implies that the service rates of the wireless nodes are also correlated (in both space and time), which, as a result, generates coupling between the activities of their associated queues under random traffic patterns.
The coupled queues cause difficulty in the spatiotemporal analysis of wireless networks. 
It is worth noting that the exact characterization of the correlated queues is unknown even in  simple settings  (refer to \cite{Rao_1988}). In the existing literature, there are  broadly two approaches for the spatiotemproal analyses trying to capture the temporal traffic dynamics and spatial nodes variation to some extend. The first approach is focused on the development of iterative frameworks/algorithms specific to the performance metric of interest. 
In this approach, the queues are first  decoupled by modeling the activity of each node independently using  a spatial mean activity \cite{Bartllomiej2016,Emara_2020,Chisci_2019,Elsawy_2020,Howard_Yang_2020}  (or,   a spatial distribution \cite{Howard_Yang_2019,Howard_Yang_2020a,yang2020age,emara2019spatiotemporal}) and then solve the analytical framework in an iterative manner until the spatial mean (or,  the spatial distribution) converge to its {\em fixed-point solution}. On the other hand, the second approach adopts to the  approach used in queueing theory literature for obtaining the performance bounds on correlated queues (for example, see \cite{Rao_1988}). In this approach, the activities of nodes are decoupled by constructing  {\em a dominant system}  wherein the interfering nodes are considered to have the saturated queues \cite{bonald2004wireless,Haenggi_Stability,Zhong_SpatoiTemporal,hu2018age}. 
Because of their saturated queues, the  activities of  interfering nodes will be increased. Thus,  the observing node will overestimate the interference power, and hence, needless to say, its performance will be a bound.
However, such a bound tends to get loosen when the  traffic conditions are lighter, which is quite intuitive. In  \cite{bonald2004wireless},  a second degree of dominant system is presented wherein the interferers are assumed to operate under their corresponding dominant systems. Naturally, this second degree modifications will provide a better performance bound as it estimates the interference more accurately  compared to  the dominant  system. In contrast, the coupling between their activities become insignificant when a massive number of nodes access the channel in a sporadic manner and thus it can be  safely  ignored in the analysis \cite{Y_Zhong}.
} 

{ Nevertheless, there are only a handful of recent works focusing on the spatiotemproal analysis of AoI for wireless networks.  For example, for a Poisson bipolar network,  \cite{yang2020age} derived the spatiotemporal average of AoI for an infinite-length first-come-first-served (FCFS) queue, whereas \cite{yang2020optimizing} derived the spatiotemporal mean PAoI for a unit-length last-come-first-served (LCFS) queue with replacement. The authors first obtained a fixed point solution to the meta distribution in an iterative manner and then applied it to determine the spatiotemporal mean AoI. The authors of \cite{yang2020optimizing} also investigated a locally adaptive scheduling policy that minimizes the mean AoI.
Further, \cite{hu2018age} derived the upper and lower bounds on the cumulative distribution function ($\cdf$) of the temporal mean AoI for a Poisson bipolar network using the construction of dominant system.
On the other hand, the spatiotemporal analysis peak AoI for uplink Internet-of-things (IoT) networks is presented in \cite{emara2019spatiotemporal,Emara_2020,Praful_GC1} by modeling the locations of BSs and IoT devices using independent Poisson point process (PPP). The authors of \cite{emara2019spatiotemporal} derived the mean peak AoI for time-triggered (TT) and event-triggered (ET) traffic. The authors employed an iterative framework wherein quantized meta distribution  and spatial average activities of devices with different classes (properly constructed based on TT and ET traffic) are determined together.  Further, \cite{Emara_2020} derived the mean peak AoI for the prioritized multi-stream traffic (PMT) by iteratively solving the queueing theoretic framework (developed for PMT) and priority class-wise successful transmission probabilities together. 
The authors of \cite{Praful_GC1} derived the  spatial distribution of the temporal mean peak AoI while assuming the IoT devices sample their updates using {\em generate-at-will} policy (see \cite{abd2018role}).  
 Besides, it is worth noting that the delay analyses presented in \cite{Chisci_2019,Elsawy_2020,Howard_Yang_2019,Howard_Yang_2020a} can be extended to analyze the PAoI under FCFS queuing discipline. 
}

\subsection{Contributions}
\label{subsec:Contribution}
This paper presents a stochastic geometry-based analysis of AoI for a large-scale wireless network, wherein the sources transmit time-sensitive status updates to  their corresponding destinations. In particular,  we derive the spatial distribution of  mean PAoI  while assuming that the locations of SD pairs follow a homogeneous bipolar PPP.
In order to overcome the challenge of interference-induced coupling across queues associated with different SD pairs, we propose a tractable two-step analytical approach which relies on a careful construction of {\em dominant systems}. 
The proposed framework efficiently captures the stochastic interaction in both space (through interference) and time (through random transmission activities).  Our approach provides a much tighter lower bound on the spatial moments of the conditional (location-dependent) successful transmission probability compared to the exiting stochastic geometry-based analyses, e.g.,  \cite{Zhong_SpatoiTemporal}, which mainly rely on the assumption of having saturated queues at the interfering nodes.

 { The above construction of dominant system allows to model the upper bound of service times of the update transmissions using geometric distribution.  Further, assuming the Bernoulli arrival of status update, we model status update transmissions using $\mathtt{Geo}/\mathtt{Geo}/1$ queue with no storage facility under preemptive and non-preemptive disciplines. For this setup, we  present the spatiotemproal analysis of the PAoI.
In particular, we derive tight upper bounds on the spatial moments of the temporal mean PAoI for the above queue disciplines.
} 
The contributions of this paper are briefly summarized as below.  
\begin{enumerate}
\item This paper presents a novel analytical framework to determine close lower bounds on the moments of the conditional success probability  while efficiently capturing the interference-induced coupling in the activities of SD pairs.
\item We derive the temporal mean  of PAoI under  {both preemptive and non-preemptive} queueing disciplines for a given success  probability of transmission.
\item Next, using the lower bounds on the moments of the conditional success probability, we derive tight upper bounds on the spatial moments of the temporal mean PAoI for both queuing disciplines.    
\item Using the beta approximation for the distribution of conditional success probability \cite{Martin2016Meta}, we also  characterize the spatial distribution of temporal mean PAoI.
\item Next, we validate the accuracy of the proposed analytical framework for AoI analysis through extensive simulations. Finally, our numerical results reveal the impact of key design parameters, such as the medium access probability, update arrival rate, and signal-to-interference ($\sir$) threshold, on the spatial mean and standard deviation of the temporal mean PAoI observed under the aforementioned  queuing disciplines.  
\end{enumerate}
\section{System Model}
\label{Sec:System_Model}
{ We model SD pairs using a static network wherein the locations of sources are distributed according to a PPP $\Phi$ with density $\lambda_\text{sd}$  and their corresponding destinations are located at fixed distance $R$ from them in uniformly random  directions. 
The $\sir$ measured at the  destination at $\z$ in the $k$-th transmission slot is  
\begin{equation}
\sir_{\z,k}=\frac{h_{\z}^kR_o^{-\alpha}}{\sum_{\x\in\Phi}h_{\x,\z}^k\|\x-\z\|^{-\alpha}\mathds{1}(\x\in\Phi_k)},
\label{eq:SIR}
\end{equation}
where $\Phi_k$ is the set of sources with active transmission during the $k$-th transmission slot, and {$\mathds{1}(\x\in\Phi_k)$ is 1 if $\x\in\Phi_k$, otherwise 0.}  $\alpha$ is the path-loss exponent, and $h_\z^k$ and $h_{\x,\z}^k$ are the channel gains for the link from deriving source and the link from the source at $\x$ to the destination at $\z$, respectively, in transmission slot $k$.   We assume quasi-static Rayleigh fading and model, which implies  $h_{\x,\z}^k\sim{\rm exp}(1)$  independently across both $\x\in\Phi$ and $k\in\mathbb{N}$. 

Since the point process of SD pairs is assumed to be stationary, the distribution of  $\sir$ observed by  the different devices across a large scale static realization of network  is equivalent to the  distribution of $\sir$ measured at a fixed point across multiple realizations. This can be formalize using {\em Palm distribution}, which is the conditional distribution of the point process given that the typical point is present  at a fixed location. Further, by the virtue of Slivnyak's theorem, we know that the Palm distribution for PPP is same as the original distribution of PPP. Please refer to \cite{haenggi2012stochastic} for more details.
 Therefore, we place the  typical SD pair link such that its source and destination are at the origin $o$ and $\x_o\equiv [R,0]$, respectively. Fig. \ref{Fig:Illustration} presents a representative realization of the Poisson bipolar network with the typical link at $(o,\x_o)$. As the analysis is focused on this typical link,  we drop the subscript $\z$ from $\sir_{\z,k}$ and $h_{\x,\z}^k$ here onwards.}
  \begin{figure}[h]
\centering
 \includegraphics[trim=0cm .8cm 0cm .8cm, width=.55\textwidth]{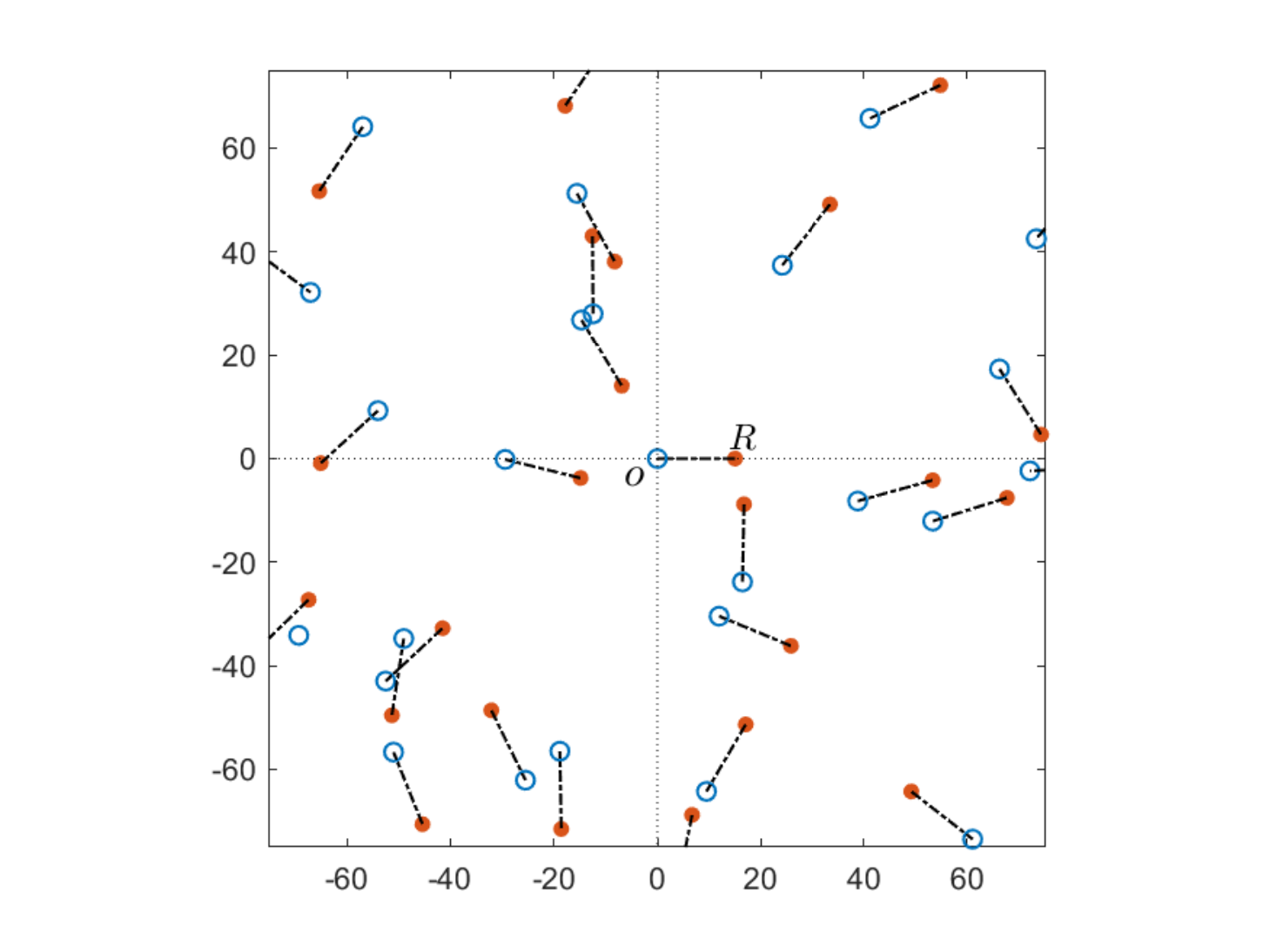}
 \caption{{A typical realization of  the Poisson bipolar network for $\lambda_\text{sd}=10^{-3}$ links/m$^2$ and $R=15$ m}. Orange dots and blue circles represent the locations of sources and destinations, respectively.}
 \label{Fig:Illustration}
\end{figure}
\subsection{Conditional Success Probability}
The transmission is considered to be successful when the received $\sir$ is greater than a threshold $\beta$. 
From \eqref{eq:SIR}, it is clear that the successful transmission probability measured at the typical destination placed at  $o$ depends on the PPP $\Phi$ of the interfering sources  and is given by
{\begin{align}
&\mu_\Phi=\P[\sir_k>\beta|\Phi]
=\mathbb{P}\left[h_{\x_o}^k>\beta R^\alpha \sum\nolimits_{\x\in\Phi}h_{\x}^k\|\x\|^{-\alpha}\nb1(\x\in\Phi_k)\big|\Phi\right],\nonumber\\
&=\mathbb{E}\left[\exp\left(-\beta R^\alpha \sum\nolimits_{\x\in\Phi}h_{\x}^k\|\x\|^{-\alpha}\nb1(\x\in\Phi_k)\right)\big|\Phi\right]=\prod_{\x\in\Phi} \left[\frac{p_{\x}}{1+\beta R^{\alpha}\|\x\|^{-\alpha}} + (1-\p_\x)\right],
\label{eq:conditional_SuccessProb}
\end{align}}
where $p_{\x}$ represents the probability that the source at $\x\in\Phi$ is active. 
{Note that the time average activity $p_\mathbf{x}$ for source at $\x$ is used in \eqref{eq:conditional_SuccessProb}. This implies that the typical destination observes the activities of interfering sources as a time-homogeneous process at any given transmission slot. This assumption will help to develop a new framework for an accurate spatio-temporal analysis, as will be evident shortly. }
{The conditional success probability $\mu_\Phi$ will be useful to determine the packet delivery rate over the typical link for a given $\Phi$. Thus, the knowledge of  the meta distribution, defined below, is  crucial to characterize the queue performance for the typical link.
    \begin{definition}[Meta Distribution] The meta distribution of $\sir$ is defined in \cite{Martin2016Meta} as
    \begin{equation}
        {\rm D}(\beta,x)=\mathbb{P}[\mathbb[\sir_k>\beta|\Phi]>x]=\mathbb{P}[\mu_\Phi>x],
    \end{equation}
    where $\mu_\Phi$, given in (2), is the conditional success probability measured at the typical destination for a given $\Phi$. 
    \end{definition}}
\subsection{Traffic Model and AoI Metric}
\label{eq:Traffic_Model_and_AoI}
{We consider that the  source at $\x\in\Phi$ transmit  updates to  the corresponding  destination regarding its associated physical random processes $H_\x(t)$. }   
{ Each source (independently of others) is assumed to sample its associated  physical random processes  at the beginning of each transmission slot according to a Bernoulli process with parameter $\lambda_\text{a}$.  
This assumption of a fixed parameter  Bernoulli process for modeling the update arrivals  can be seen is an necessary approximation of a  scenario where the  sources may observe different physical random processes. 
 However, one can relax the fixed arrival rate assumption with a few but straightforward modifications to the analysis presented in this paper.}

 { The transmission is considered to be successful if  $\sir$ received at the destination is above threshold $\beta$. 
 Thus, the source is assumed to keep transmitting a status update until it receives the successful transmission acknowledgement from the destination on a separate feedback channel which is assumed to be error free. }
The successful delivery of an update takes a random number of transmissions depending on the channel conditions that further depend on numerous factors, such as fading coefficients, received interference power, and network congestion.
Links that are in close proximity of each other  
may experience arbitrarily small update delivery rate because of severe interference, especially when update arrival rate is high. Therefore, to alleviate the impact of severe interference in such cases, we assume that each source attempts transmission with probability $\xi$ independently of the other sources in a given time slot. 
Also note that  the probability of the attempted  transmission  being  successful in a given time slot  is  the conditional success probability $\mu_\Phi$ because of the assumption of independent fading.
Therefore, the number of slots needed for delivering an update at the typical destination can be modeled using the geometric distribution with parameter $\xi\mu_\Phi$ for a given $\Phi$.

{ 
   Let $t_k$ and $t_k^\prime$ be the instances of the arrival (or, sampling)  and reception of the $k$-th update at the source and destination, respectively.
Given time slot $n$, let $D_n=\max\{k|t_k^\prime \leq n\}$ be the slot index of the most recent  update received at the destination and  $A_n$ be the  slot index  of the arrival (at source) of the most recent update received at destination (i.e., at $D_n$). Thus, for the time-slotted transmission of status updates, the AoI can be defined as  
\begin{equation}
\Delta(n)=\begin{cases}
\Delta(n-1)+1,~~&\text{if~transmission~fails~in~} n\text{-th~slot} \\
\Delta(n-1)+1-A_n,~~&\text{otherwise.} 
\end{cases}\tag{3}
\end{equation}
}
The AoI $\Delta(n)$ increases  {in a staircase fashion} with time and drops upon reception of a new update at the destination to the total  {number of slots} experienced by this new update in the system. Note that the minimum possible AoI is one because we assume arrival and delivery of an update to occur at the beginning and the end of the transmission slots, respectively. 
 Given this background, we are now ready to define PAoI, which will be studied in detail in this paper. 

\begin{definition}[PAoI]
\label{def:Peak_AoI}
The PAoI is defined in \cite{costa2016age} as the value of AoI process $\Delta(n)$ measured immediately before the reception of the $k$-th update and is given by
\begin{equation}
A_k=T_{k-1}+Y_k.
\label{eq:Peak_Age}
\end{equation} 
{where $T_k=t_k^\prime-t_k$ is the time spent by the $k$-th update in the system and $Y_k=t_k^\prime-t_{k-1}^\prime$ is the time elapsed between the receptions of the $(k-1)$-th and $k$-th updates.}
\end{definition}
As evident from the above discussion, the mean PAoI measured at the typical destination depends on the  conditional success probability $\mu_\Phi$ and hence the mean PAoI is a random variable. Therefore, our goal is to determine the distribution of the conditional mean PAoI of the SD pairs distributed across the network. In the following we define the distribution of mean PAoI. 
{\begin{definition}[Conditional mean PAoI]
For a given $\Phi$, the conditional (temporal) mean PAoI measured at the typical destination is defined as
\begin{equation}
\bar{A}(\beta;\Phi)=\E[A_k|\beta;\Phi],
\end{equation}
and the complementary $\cdf$ of $\bar{A}(\beta;\Phi)$ is defined as 
\begin{equation}
\bar{F}(x;\beta) =\mathbb{P}[\bar{A}(\beta;\Phi)>x],
\label{eq:AoI_Distribution}
\end{equation}
where $\beta$ is the $\sir$ threshold.
\end{definition}}
\subsection{Queue Disciplines}
{ As discussed in Section \ref{sec:prior_art}, the construction of the dominant system allows to decouple the activity of typical link with the activities of other links. Thus, for a given $\Phi$, the service rate of the typical link becomes time-invariant (but a lower bound) in the dominant system. This implies that  the service process can be modeled using the geometric distribution for analyzing the conditional  performance bounds of the queue associated with the typical link. Therefore,  with the above discussed Bernoulli sampling, we can model the status update transmissions over the typical link using $\mathtt{Geo/Geo/1}$ with   arrival and service rates equal to $\lambda_{\rm a}$ and $\xi\mu_\Phi$, respectively, for a given $\Phi$.   }
{In particular, we consider $\mathtt{Geo/Geo/1}$ queue with no storage facility (i.e., zero buffer) under preemptive and non-preemptive disciplines.  In preemptive case, the older update in service (or, retransmission) is discarded upon the arrival of a new update. However, in non-preemptive case, the newly arriving updates are discarded until the one in service is successfully delivered.
The  disadvantage of non-preemptive discipline is that if the server takes a long time to transmit the packet (because of failed transmission attempts), the update in the server gets stale, which impacts AoI at the destination. On the contrary,  under the preemptive discipline,  the source always ends up transmitting the most recent update available at the successful transmission. Thus, this discipline  is  optimal from the perspective of minimizing AoI.
Nonetheless, the mean AoIs  under both disciplines are almost the same for the sources experiencing high service rates, as will be evident shortly. That said, the analysis of non-preemptive discipline is still important because it acts as the precursor for the more complicated analysis of preemptive discipline. } 

{Here onward, we will append subscripts ${\rm P}$ and $\rm NP$ to $\bar{A}(\beta;\Phi)$ to denote the conditional mean PAoI under the preemptive and non-preemptive queueing disciplines, respectively. 
 We first present the analysis of  $\bar{A}_{\rm NP}(\beta;\Phi)$ in Section \ref{sec:TypeI}. Next, we extend the analysis $\bar{A}_{\rm P}(\beta;\Phi)$  in Section \ref{sec:TypeII} using the analytical framework developed in Section \ref{sec:TypeI}. }
 
\section{ AoI under non-preemptive $\mathtt{Geo/Geo/1}$ queue}
\label{sec:TypeI}
{In non-preemptive discipline, each source transmits the updates on  the first arrival basis without buffering them.}  As a result, the  updates arriving during the ongoing transmission (i.e., busy server) are dropped. 
The sample path of the  AoI  process for this discipline is illustrated in Fig. \ref{Fig:SamplePath}. 
\begin{figure}[h]
\centering
 \includegraphics[clip, trim=3cm 11.4cm 2cm 7.5cm, width=0.6\textwidth]{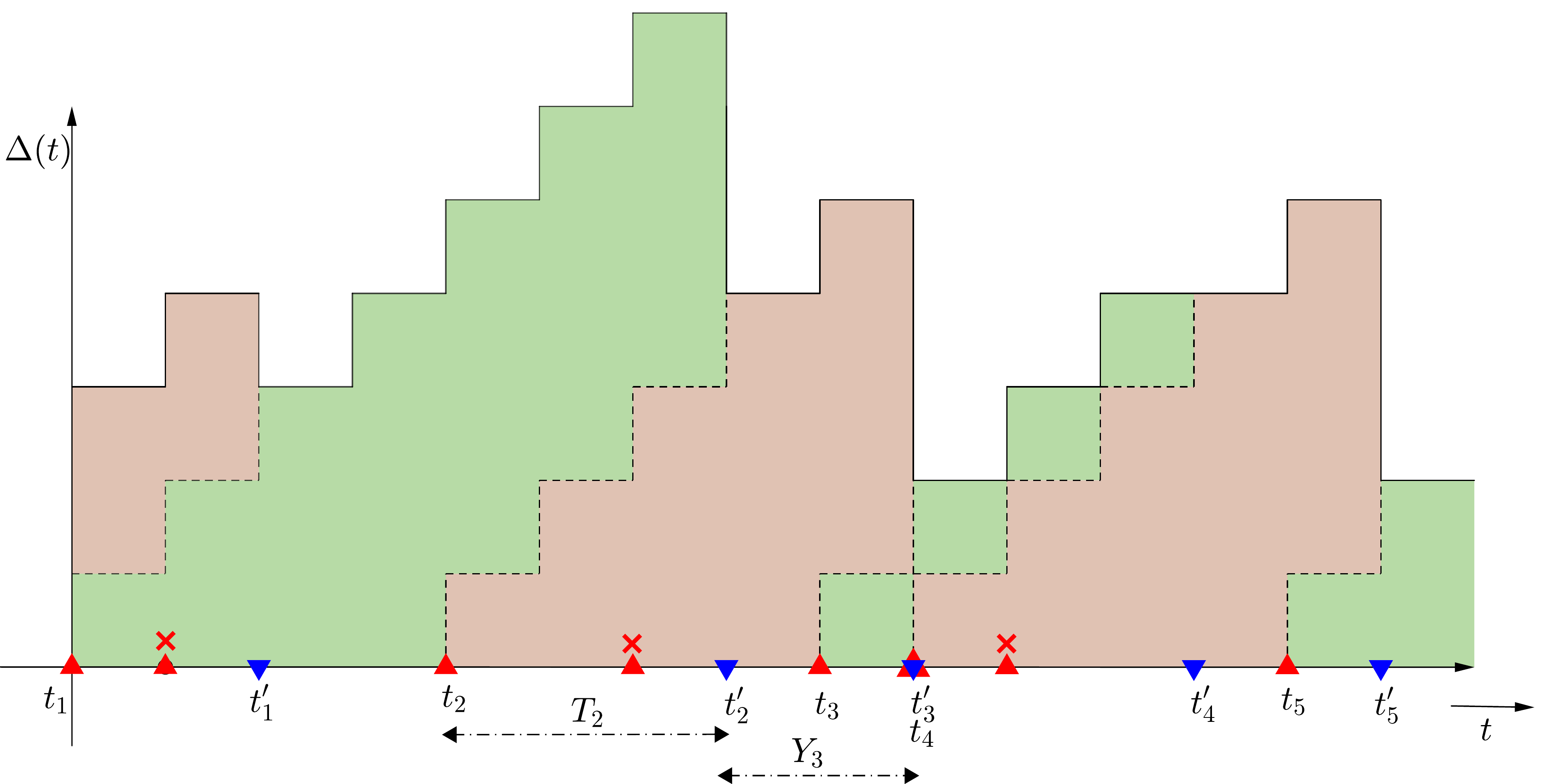}
 \caption{Sample path of the AoI $\Delta(t)$ under the non-preemptive discipline. }
 \label{Fig:SamplePath}
\end{figure}
The red upward and  blue downward arrow marks indicate the arrival  and reception of updates at the source  and destination, respectively. The red cross marks indicate the instances of dropped updates which arrive while the server is busy.
We first derive the conditional mean PAoI   in the following subsection. 
In the subsequent subsections, we will develop an approach to derive the distribution of the conditional mean PAoI using stochastic geometry.
\subsection{Conditional Mean PAoI}
As discussed before, the update delivery rate is governed by the product of the medium access probability $\xi$ and the conditional success probability $\mu_\Phi$. 
Thus, for a given $\Phi$, {the probability mass function ($\pmf$) of number of time slots required for a successful transmission is 
  \begin{equation}
 \P[T_k=n]=\xi\mu_\Phi(1-\xi\mu_\Phi)^{n-1}, \text{~~for~}n=1,2,\dots
 \label{eq:Tk_pmf}
 \end{equation}
Thus, we can obtain $\E[T_k]=\xi^{-1}\mu_\Phi^{-1}$.} 
Since we assumed zero length buffer queue, the next transmission is possible only for  the  update arriving  after the successful reception of the ongoing update. {Therefore, the time between the successful reception of $(k-1)$-th and $k$-th updates is 
\begin{equation*}
Y_k=V_k+T_k,\text{~~~such that~}Y_k\geq 1,
\end{equation*}
where $V_k$ is the number of slots required for the $k$-th update to arrive after the successful delivery of the $(k-1)$-th update. It is worth noting that the inequality $Y_k\geq 1$  follows from the assumption of the transmission of an update begins in the same slot in which it arrives. Also, the inequality $V_k\geq 0$ (necessary to hold $Y_k\geq 1$ since $T_k\geq 1$) is quite evident as the next update  can arrive at the beginning of next transmission slot after successful reception.
Therefore, using the Bernoulli arrival of status updates, we can obtain  ${\rm pmf}$ of $V_k$ as below
     \begin{equation}
         \mathbb{P}[V_k=n]=\lambda_{\rm a}(1-\lambda_{\rm a})^{n}, \text{~~~for~}n=0,1,2,\dots
         \label{eq:Vk_pmf}
     \end{equation}
Using the above ${\rm pmf}$, we can obtain $\mathbb{E}[V_k]={\cal Z}_{\rm a}= \frac{1}{\lambda_{\rm a}}-1$.}
Now, from \eqref{eq:Peak_Age}, the conditional mean PAoI   for given $\Phi$ becomes
\begin{equation}
\bar{A}_{\rm NP}(\beta;\Phi)={\cal Z}_{\rm a}+\frac{2}{\mu_\Phi\xi}.
\label{eq:Mean_Peak_Age_Given_Phi}
\end{equation} 
 Using \eqref{eq:Mean_Peak_Age_Given_Phi} and the distribution of $\mu_\Phi$, we can directly determine the   distribution of $\bar{A}_{\rm NP}(\beta;\Phi)$. However, from \eqref{eq:conditional_SuccessProb}, it can be seen that the knowledge of probability $p_\x$ of the interfering source at $\x\in\Phi$ being active is required to determine the distribution of $\mu_\Phi$. For this, we first determine the activity of the typical source for a given $\mu_\Phi$  in the following subsection which will later be used to define the activities of interfering sources in order to characterize the distribution of   $\mu_\Phi$. 
\subsection{Conditional Activity} 
\label{subsec:ConditionalActivity}
As  each source is assumed to transmit independently in a given time slot with probability $\xi$, the  conditional probability of the  typical source having an  active transmission becomes
\begin{align}
\zeta_o=\xi\mathbf{\pi}_1,\label{eq:Activity}
\end{align}
 where $\mathbf{\pi}_1$ is the conditional steady state probability that the source has  an update to transmit. Thus, $\zeta_o$ depends on the probability $p_\x$ of the interfering source at $\x\in\Phi$ being active through the conditional success probability $\mu_\Phi$ (see \eqref{eq:conditional_SuccessProb}).
 \begin{figure}[h]
\centering
 \includegraphics[clip, trim=17cm 18cm 30cm 14.7cm, width=0.5\textwidth]{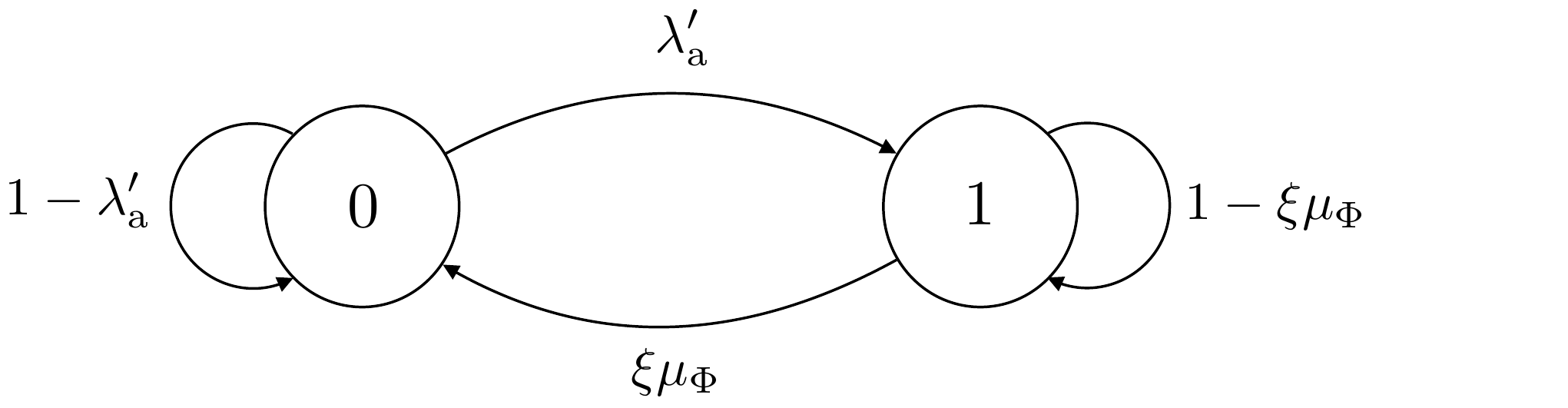}
 \caption{State Diagram of the $\mathtt{Geo/Geo/1}$ queue. States 0 and 1 respectively signify idle and busy states of the typical source.}
 \label{Fig:StateDiagram}
\end{figure}
 The steady state distribution  of a queue is characterized by its arrival and departure processes. In our case, both the arrivals and departures of the updates of $H(t)$ follow geometric distributions with parameters   $\lambda_{\rm a}^\prime=\mathcal{Z}_{\rm a}^{-1}$ and  $\xi\mu_\Phi$, respectively. Fig. \ref{Fig:StateDiagram} shows the state diagram for the $\mathtt{Geo/Geo/1}$ queue. Let $\mathbf{\pi}_0$ and $\mathbf{\pi}_1$ be the steady state probabilities of states 0 and 1, respectively. Thus, we have 
\begin{equation}
\mathbf{\pi}_0=\frac{\xi\mu_\Phi}{\lambda_{\rm a}^\prime +\xi\mu_\Phi} \text{~~and~~} \mathbf{\pi}_1=\frac{\lambda_{\rm a}^\prime}{\lambda_{\rm a}^\prime +\xi\mu_\Phi}.
\label{eq:SteadyState}
\end{equation}
\subsection{Meta Distribution}
\label{subsec:MetaDistribution}
From the above it is clear that the  mean PAoI  jointly depends  on the PPP $\Phi$ of the interfering sources and their activities $p_\x$ through the conditional success probability $\mu_\Phi$. Hence, the knowledge of the exact distribution of $\mu_\Phi$, i.e., $\P[\mu_\Phi\leq x]$, is essential  for characterizing the spatial distribution of the mean PAoI.   However, it is very challenging to capture the temporal correlation among the activities of the sources in the success probability analysis. 
 Therefore, we present the moments and approximate distribution of $\mu_\Phi$ in the following lemma while assuming  the activities $p_\x$ to be independent and identically distributed (i.i.d.). 
\begin{lemma}
\label{lemma:moments_SuccessProb}
 The $b$-th moment of the conditional success probability  $\mu_\Phi$  is 
\begin{align*}
M_b&=\exp\left(-\pi\lambda_\text{sd}\beta^\delta R^2\hat{\delta}C_{\zeta_o}(b) \right),
\numberthis\label{eq:Moments_mu_Phi}
\end{align*}
where $\delta=\frac{2}{\alpha}$, $\hat{\delta}=\Gamma(1+\delta)\Gamma(1-\delta)$ and 
\begin{equation*} 
C_{\zeta_o}(b)=\sum_{m=1}^\infty {b \choose m}{\delta - 1 \choose m-1}\bar{p}_m,
\end{equation*} 
and $\bar{p}_m$ is the $m$-th moment of the activity probability.
The  meta distribution can be approximated with the beta distribution as 
{\begin{align}
{D}(\beta,x)\approx 1-B_{x}(\kappa_1,\kappa_2),
\label{eq:beta_approximation}
\end{align}  
 where $B_x(\cdot,\cdot)$} is the regularized incomplete beta function and
\begin{equation}
\kappa_1=\frac{M_1\kappa_2}{1-M_1}\text{~~and~~}\kappa_2=\frac{(M_1-M_2)(1-M_1)}{M_2-M_1^2}.
\label{eq:Beta_parameters}
\end{equation}
\end{lemma}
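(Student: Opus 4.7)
The strategy is to compute the $b$-th moment directly from the product representation of $\mu_\Phi$ in \eqref{eq:conditional_SuccessProb} via the PGFL of the PPP, and then fit a beta distribution to $\mu_\Phi$ on $[0,1]$ by matching its first two moments. First I would raise \eqref{eq:conditional_SuccessProb} to the $b$-th power,
\begin{equation*}
\mu_\Phi^b = \prod_{\x\in\Phi}\Bigl(1 - \frac{p_\x\,g(\|\x\|)}{1+g(\|\x\|)}\Bigr)^{b},\qquad g(r)\nbydef \beta R^{\alpha}r^{-\alpha},
\end{equation*}
take the expectation over $\{p_\x\}$ inside each factor (permissible since the $p_\x$ are iid and independent of $\Phi$), and invoke the PGFL of the homogeneous PPP of density $\lambda_{\text{sd}}$ to obtain
\begin{equation*}
M_b = \exp\!\Bigl(-\lambda_{\text{sd}}\!\int_{\mathbb{R}^2}\!\Bigl(1-\mathbb{E}_p\Bigl[\bigl(1-\tfrac{p\,g(\|\x\|)}{1+g(\|\x\|)}\bigr)^{b}\Bigr]\Bigr)\mathrm{d}\x\Bigr).
\end{equation*}

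Next I would expand the $b$-th power by the generalized binomial theorem, swap expectation with the sum (using $\mathbb{E}[p^{m}]=\bar p_{m}$), and swap sum with integral. Passing to polar coordinates and substituting $u=g(r)^{-1}=r^{\alpha}/(\beta R^{\alpha})$ reduces the $m$-th radial integral to the standard beta integral
\begin{equation*}
\int_{\mathbb{R}^2}\!\Bigl(\tfrac{g(\|\x\|)}{1+g(\|\x\|)}\Bigr)^{m}\mathrm{d}\x = \pi\delta\beta^{\delta}R^{2}\!\int_{0}^{\infty}\!\frac{u^{\delta-1}}{(1+u)^{m}}\mathrm{d}u = \pi\delta\beta^{\delta}R^{2}\,\frac{\Gamma(\delta)\Gamma(m-\delta)}{\Gamma(m)},
\end{equation*}
valid for all $m\ge 1$ since $0<\delta<1$. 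This yields the intermediate form
\begin{equation*}
M_b = \exp\!\Bigl(-\pi\lambda_{\text{sd}}\beta^{\delta}R^{2}\sum_{m=1}^{\infty}\binom{b}{m}(-1)^{m+1}\bar p_{m}\,\frac{\delta\,\Gamma(\delta)\Gamma(m-\delta)}{\Gamma(m)}\Bigr).
\end{equation*}
The last step is to identify the coefficient with $\hat\delta\binom{\delta-1}{m-1}$; for this I would apply the reflection formula $\Gamma(z)\Gamma(1-z)=\pi/\sin(\pi z)$ at $z=\delta-m+1$ and use $\sin(\pi(\delta-m+1))=(-1)^{m+1}\sin(\pi\delta)$, which gives $(-1)^{m+1}\Gamma(m-\delta)=\Gamma(\delta)\Gamma(1-\delta)/\Gamma(\delta-m+1)$. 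Combined with $\binom{\delta-1}{m-1}=\Gamma(\delta)/[\Gamma(m)\Gamma(\delta-m+1)]$ and $\hat\delta=\delta\Gamma(\delta)\Gamma(1-\delta)$, this rewrites the series as $\hat\delta\,C_{\zeta_o}(b)$ and delivers \eqref{eq:Moments_mu_Phi}.

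For the beta approximation \eqref{eq:beta_approximation}, since $\mu_\Phi$ is a $[0,1]$-valued random variable I would fit $\mu_\Phi\sim\mathrm{Beta}(\kappa_1,\kappa_2)$ by the method of moments. Setting the known beta moments $\mathbb{E}[X]=\kappa_1/(\kappa_1+\kappa_2)$ and $\mathbb{E}[X^{2}]=\kappa_1(\kappa_1+1)/[(\kappa_1+\kappa_2)(\kappa_1+\kappa_2+1)]$ equal to $M_1$ and $M_2$, respectively, and eliminating $\kappa_1+\kappa_2$ between the two equations produces the expressions in \eqref{eq:Beta_parameters}; the complementary form $1-B_x(\kappa_1,\kappa_2)$ is then immediate from the definition of the regularized incomplete beta function.

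The main obstacle I anticipate is the Gamma-function bookkeeping that converts $(-1)^{m+1}\Gamma(\delta)\Gamma(m-\delta)/\Gamma(m)$ into the combinatorial coefficient $\hat\delta\binom{\delta-1}{m-1}$; all other steps are routine. A minor technical point is the interchange of the infinite sum with the integral in the PGFL exponent, which I would justify by dominated convergence using $0<\delta<1$ to control the tail of $(g/(1+g))^{m}$ at the origin and by $\binom{b}{m}(-1)^{m+1}\bar p_m$ being absolutely summable term-by-term after integration.
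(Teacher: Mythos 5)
Your proposal is correct and follows essentially the same route as the paper's Appendix A: push the expectation over the i.i.d.\ $p_\x$ inside the product, expand $(1-x)^b$ by the generalized binomial theorem, apply the PGFL of the PPP, and reduce the resulting radial integral to a Beta integral identified (via the reflection formula) with $\hat{\delta}\binom{\delta-1}{m-1}$. The only cosmetic difference is that the paper cites the radial-integral evaluation from the reference instead of carrying out the Gamma-function bookkeeping explicitly as you do, and both treat the beta fit by standard moment matching.
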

\begin{proof}
Please refer to Appendix \ref{app:moments_SuccessProb} for the proof of moments of $\mu_\Phi$. 
\end{proof}
{The calculation of the $\bar{p}_m$ will be presented in the next subsection.} 
It must be noted that the distribution of the conditional success probability $\mu_\Phi$ is approximated using the beta distribution by equating the first two moments, similar to \cite{Martin2016Meta}.  Now, we present an approach for accurate characterization of $\mu_\Phi$ in the following subsection.
\subsection{{A New Approach for Spatiotemporal Analysis}}
\label{subsec:TwoStepAnalysis}
As discussed in Section \ref{subsec:ConditionalActivity}, the activity of the typical  source depends on its successful transmission probability which  further depends on the activities of the other sources in $\Phi$ through interference (refer to \eqref{eq:conditional_SuccessProb}).  Besides, the transmission of the typical  source causes strong interference to its neighbouring sources which in turn affects their activities. Hence, the successful transmission probabilities  of the typical source and its neighbouring sources are correlated  through interference which introduces coupling between the operations of their associated queues.
 { As discussed in Section \ref{sec:prior_art}, the spatiotemproal analysis under the correlated queues is complex. Hence, we adopt to the approach of constructing a dominant system for analyzing the performance bound on conditional success probability and thereby on the conditional mean PAoI.} 

On the similar lines of \cite{bonald2004wireless}, we present a novel two-steps analytical framework to enable an accurate success probability analysis using stochastic geometry while accounting for the temporal correlation in the queues associated with the SD pairs.\\
\begin{figure*}
\centering
 \includegraphics[trim=0cm 0cm 0cm 0cm, width=.5\textwidth]{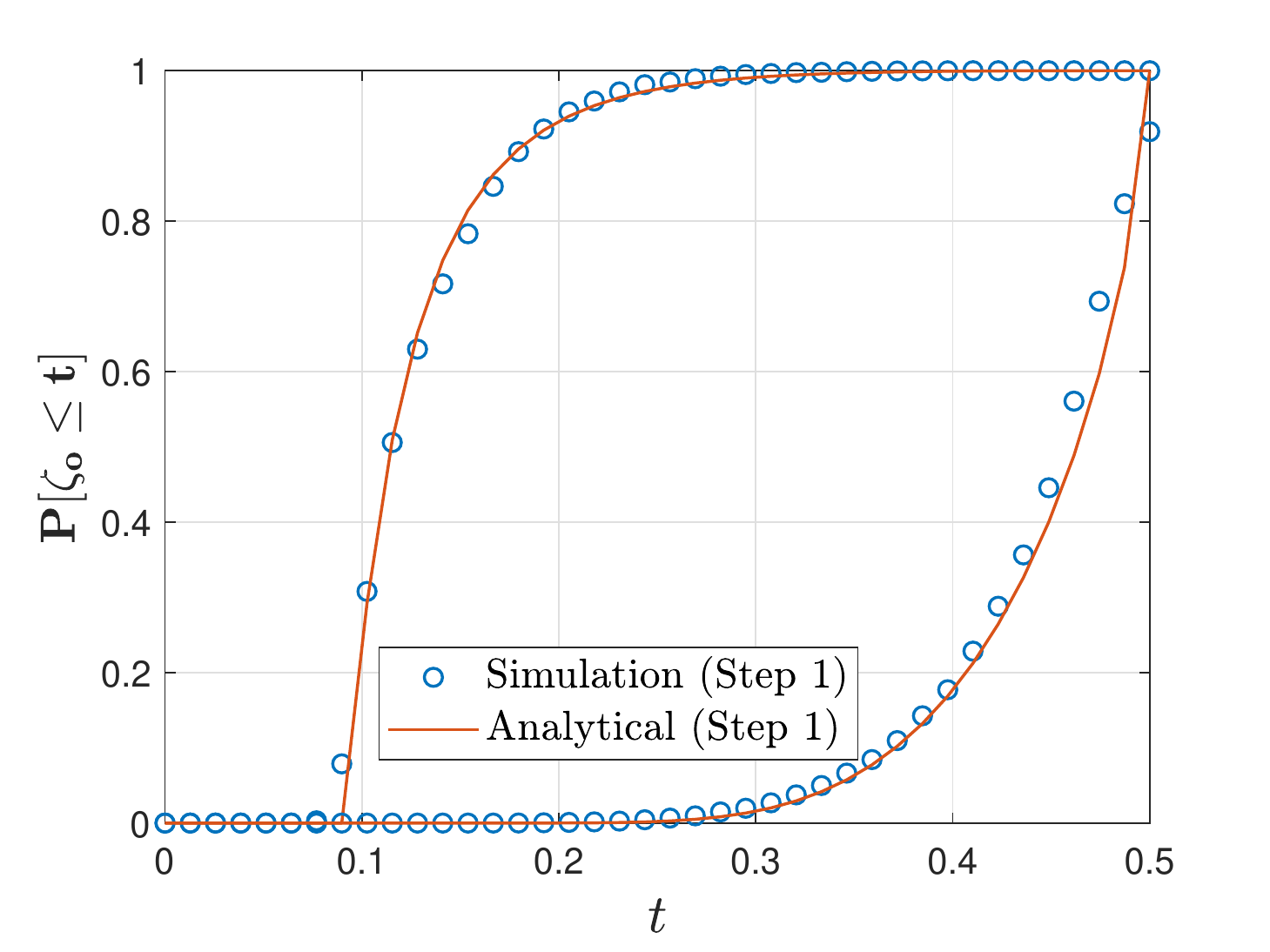}
 \caption{CDF of the activity in the dominant system for {$\lambda_{\rm a}=0.1$, $\xi=0.5$,  $\beta=2$,  $\alpha=4$, $R=10$, and $\lambda_{\rm sd}=\{10^{-3},10^{-4}\}$}.}
  \label{Fig:CDF_Activity_DomSys}
\end{figure*}
$\bullet$ {\em Step 1 ({\em Dominant System}):}
  For the dominant system,  the interfering sources having no updates to transmit are assumed to transmit dummy packets  with probability $\xi$. As a result, the success probability measured at the typical destination is  a lower bound to that  in the original system. The $b$-th moment and approximate distribution of $\mu_\Phi$ for the dominant system can be evaluated using Lemma \ref{lemma:moments_SuccessProb}  by setting $\bar{p}_m=\xi^m$.
Using \eqref{eq:Activity},  we can obtain the distribution of the activity of the typical source in the dominant system  as
\begin{align*}
\P[\zeta_o\leq t]=\P\left[\lambda_{\rm a}^\prime\left(\frac{1}{t}-\frac{1}{\xi}\right)\leq \mu_\Phi \right]\approx1-{B_{\lambda_{\rm a}^\prime\left(\frac{1}{t}-\frac{1}{\xi}\right)}}\left(\kappa_1,\kappa_2\right),\numberthis\label{eq:activity_CDF_Dom_Sys}
\end{align*}  
for $0<t\leq \xi$, where $\kappa_1$ and $\kappa_2$ are evaluated using \eqref{eq:Beta_parameters} by setting  $\bar{p}_m=\xi^m$. It is quite evident that the activity $\zeta_o$ is less than  $\xi$ which is also consistent with the  assumption  of setting $p_\x=\xi$ for $\forall \x\in\Phi$ to define the dominant system. Fig. \ref{Fig:CDF_Activity_DomSys} illustrates the accuracy of the above approximate distribution of the activity of the typical source under the dominant system.
Using \eqref{eq:activity_CDF_Dom_Sys}, the $m$-th moment of the activity of the typical source in this dominant system can be evaluated as
\begin{align}
\bar{p}_m^{\rm D}=m\int_0^\xi t^{m-1}\P[\zeta_o>t]{\rm d}t=m\int_0^\xi t^{m-1} {B_{\lambda_{\rm a}^\prime\left(\frac{1}{t}-\frac{1}{\xi}\right)}
}(\kappa_1,\kappa_2) {\rm d}t,
\label{eq:Activity_Moments_DomSys}
\end{align}
where the $\kappa_1$ and $\kappa_2$ are evaluated using \eqref{eq:Beta_parameters} by setting  $\bar{p}_m=\xi^m$.\\
$\bullet$ {\em Step 2 (Second-Degree of System Modifications):} Inspired by \cite{bonald2004wireless}, we construct a second-degree system in which each interfering source is assumed to operate in the dominant system described in Step 1 (i.e., the interference field seen by a given interfering source is constructed based on Step 1). The typical SD link is now assumed to operate under these interfering courses. 
Naturally, the activities of the interfering sources  will be higher in this modified system compared to those in the original system. As a result, the typical SD link will experience increased interference, and hence its conditional success probability will be a lower bound to that in the original system. It is easy to see that activities of the sources (in their dominant systems) are identically but not independently distributed. However, as is standard in similar stochastic geometry-based investigations, we will assume them to be independent to make the analysis tractable. In other words, we model the activities of the interfering sources in this modified system independently using the activity distribution presented in \eqref{eq:activity_CDF_Dom_Sys} for the typical SD link in its dominant system. As demonstrated in Section V, this assumption does not impact the accuracy of our results.
Hence, similar to Step 1, the $b$-th moment and the approximate distribution of $\mu_\Phi$ for this second-degree modified system can be determined  using Lemma \ref{lemma:moments_SuccessProb}  by setting $\bar{p}_m=\bar{p}_m^{\rm D}$. 

%
\subsection{Moments and Distribution of $\bar{A}_{\rm NP}(\beta;\Phi)$ }
Here, we derive the bounds on the moments and distribution of $\bar{A}_{\rm NP}(\beta;\Phi)$ using the two-step analysis of conditional success probability presented in Sections \ref{subsec:MetaDistribution} and \ref{subsec:TwoStepAnalysis}.

\begin{theorem}
The upper bound of the $b$-th moment of the conditional mean PAoI for  {non-preemptive queuing discipline with no storage} is
\begin{align}
P_b=\sum_{n=0}^b  {b \choose n} \ncalZ_{\text{a}}^{b-n}2^{n}\xi^{-n}\exp\left(-\pi\lambda_\text{sd}\beta^\delta R^2\hat{\delta}C_{\zeta_o}(-n) \right),\numberthis\label{eq:Moment_Peak_AoI}
\end{align}
where 
\begin{align*} 
C_{\zeta_o}(-n)=\sum_{m=1}^\infty {-n \choose m}{\delta - 1 \choose m-1}\bar{p}_m^{\rm D},
\end{align*} 
and $\bar{p}_m^{\rm D}$ is given in \eqref{eq:Activity_Moments_DomSys}. 
 \end{theorem}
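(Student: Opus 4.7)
The plan is to derive the moment expression by combining the closed-form conditional mean PAoI in \eqref{eq:Mean_Peak_Age_Given_Phi} with the negative-order moments of $\mu_\Phi$ supplied by Lemma \ref{lemma:moments_SuccessProb}, and then to upgrade the equality to an upper bound by replacing the true activity moments with those obtained from the second-degree dominant system constructed in Section \ref{subsec:TwoStepAnalysis}.

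First I would raise both sides of $\bar{A}_{\rm NP}(\beta;\Phi) = \mathcal{Z}_{\rm a} + \tfrac{2}{\mu_\Phi \xi}$ to the $b$-th power and apply the binomial theorem:
\begin{equation*}
\bigl[\bar{A}_{\rm NP}(\beta;\Phi)\bigr]^b \;=\; \sum_{n=0}^{b} \binom{b}{n}\, \mathcal{Z}_{\rm a}^{\,b-n}\, 2^{n}\, \xi^{-n}\, \mu_\Phi^{-n}.
\end{equation*}
Taking expectation over $\Phi$ and the associated activity randomness, and passing expectation through the finite sum, reduces the calculation to evaluating $\mathbb{E}[\mu_\Phi^{-n}]$ for $n=0,1,\dots,b$. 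The key algebraic observation is that the generic moment formula \eqref{eq:Moments_mu_Phi} remains valid at integer negative order $-n$: substituting $b \mapsto -n$ into $C_{\zeta_o}(b)$ produces precisely the series $C_{\zeta_o}(-n) = \sum_{m\ge 1}\binom{-n}{m}\binom{\delta-1}{m-1}\bar{p}_m$ appearing in the claim, which converges because of the geometric-type decay in $m$ inherited from the product representation of $\mu_\Phi$.

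To convert this exact identity (in terms of the unknown true $\bar{p}_m$) into the upper bound in the statement, I would invoke the second-degree dominant construction of Section \ref{subsec:TwoStepAnalysis}. In that modified system every interfering source is assumed to operate under its own first-level dominant system, so its marginal activity stochastically dominates the true activity; consequently its $m$-th raw moment satisfies $\bar{p}_m^{\rm D} \ge \bar{p}_m$. Using the independence across interferers adopted in Step 2, the aggregate interference seen at the typical receiver in the modified system stochastically dominates the true interference, which yields $\mu_\Phi^{\rm mod} \le \mu_\Phi$ pathwise. Since $x \mapsto x^{-n}$ is decreasing on $(0,1]$ for $n\ge 0$, this gives $\mathbb{E}[\mu_\Phi^{-n}] \le \mathbb{E}[(\mu_\Phi^{\rm mod})^{-n}]$, and the right-hand side is evaluated by plugging $\bar{p}_m = \bar{p}_m^{\rm D}$ into \eqref{eq:Moments_mu_Phi} with order $-n$. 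Assembling these bounds in the binomial sum gives the claimed $P_b$.

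The main obstacle I anticipate is not the binomial/Lemma step but the justification of the stochastic dominance that elevates the identity to an upper bound. In particular, the substitution $\bar{p}_m \to \bar{p}_m^{\rm D}$ inside Lemma \ref{lemma:moments_SuccessProb} tacitly assumes i.i.d.\ interferer activities, whereas physically these activities are spatially coupled through the shared interference field. Handling this rigorously requires appealing to the decoupling assumption of Step 2 in Section \ref{subsec:TwoStepAnalysis} together with the monotonicity of $\mu_\Phi$ in each per-interferer activity; the residual approximation incurred by the independence assumption is the one the authors defer to the numerical validation in Section V, so the "proof" here is really a bound under that modeling convention rather than a pathwise pointwise inequality on the original coupled system.
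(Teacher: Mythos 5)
Your proposal matches the paper's proof: both expand $\bigl(\mathcal{Z}_{\rm a} + 2\xi^{-1}\mu_\Phi^{-1}\bigr)^b$ via the binomial theorem, invoke Lemma~\ref{lemma:moments_SuccessProb} at negative integer order to obtain $\E[\mu_\Phi^{-n}]=M_{-n}$, and then substitute $\bar{p}_m=\bar{p}_m^{\rm D}$ from the second-degree dominant construction, with the upper bound following from the fact that the dominant system underestimates $\mu_\Phi$ while $x\mapsto x^{-n}$ is decreasing. Your closing caveat about the i.i.d.\ decoupling assumption being a modeling convention rather than a pathwise dominance on the true coupled system is accurate and is exactly the approximation the paper acknowledges in Step~2 of Section~\ref{subsec:TwoStepAnalysis} and defers to the numerical validation.
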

\begin{proof}
Using \eqref{eq:Mean_Peak_Age_Given_Phi},  the $b$-th moment of the conditional mean  PAoI can be determined as
 \begin{align*}
P_b&=\E_\Phi\left[(2(\xi\mu_\Phi)^{-1}+{\cal Z}_{\rm a})^b\right]\stackrel{(a)}{=}\sum_{n=0}^b  {b \choose n} \ncalZ_{\text{a}}^{b-n}2^{n}\xi^{-n}M_{-n} \numberthis\label{eq:Moment_Peak_AoI_a}
 \end{align*}  
where (a) follows using the binomial expansion and $\E[\mu_\Phi^{-n}]=M_{-n}$. According to the Step 2 discussed in Subsection \ref{subsec:TwoStepAnalysis},  $M_{-n}$ can be obtained using Lemma \ref{lemma:moments_SuccessProb} by setting  $\bar{p}_m=\bar{p}_m^{\rm D}$. Recall that the two-step analysis provides a lower bound on the success probability $\mu_\Phi$ because of overestimating the activities of the interfering sources. Therefore, the $b$-th moment of $\bar{A}_1(\beta;\Phi)$ given in \eqref{eq:Moment_Peak_AoI} is indeed  an upper bound since $\bar{A}_1(\beta;\Phi)$ is inversely proportional to $\mu_\Phi$.      
\end{proof}
In the following corollary, we present the simplified expressions for the evaluation of the first two moments of $\bar{A}_1(\beta;\Phi)$.
\begin{cor}
\label{cor:AoI_M12_Q1}
The upper bound of the first two moments of the conditional mean PAoI for { non-preemptive queuing discipline with no storage} are
\begin{align*}
P_1&=\ncalZ_{\text{a}} + 2\xi^{-1}M_{-1}\numberthis\label{eq:Mean_Peak_AoI}\\
P_2&=\ncalZ_{\text{a}}^2 + 4 \ncalZ_{\text{a}}\xi^{-1}M_{-1} + 4\xi^{-2}M_{-2} \numberthis \label{eq:M2_Peak_AoI}
\end{align*}
and the upper bound of its variance is 
\begin{align*}
\mathtt{Var}&=  4\xi^{-2}\left(M_{-2}-M_{-1}^2\right),\numberthis \label{eq:Variance_Peak_AoI}
\end{align*} 
where $M_{l}=\exp\left(-\pi\lambda_\text{sd}\beta^\delta R^2\hat{\delta}C_{\zeta_o}(l) \right)$, distribution of $\zeta_o$ is given in \eqref{eq:activity_CDF_Dom_Sys}, and
\begin{align*}
C_{\zeta_o}(-1)&=-\E\left[\zeta_o(1-\zeta_o)^{\delta-1}\right]\\
\text{and~}
C_{\zeta_o}(-2)&=(\delta-1)\E\left[\zeta_o(1-\zeta_o)^{\delta-2}\right] - (\delta+1)\E\left[\zeta_o(1-\zeta_o)^{\delta-1}\right].
\end{align*}
\end{cor}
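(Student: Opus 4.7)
The proof plan is largely a specialization of Theorem~1 followed by a generating-function manipulation of the series defining $C_{\zeta_o}(\cdot)$, so I do not expect any substantive obstacle.

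First, I would obtain $P_1$ and $P_2$ by evaluating the closed-form expansion
\[
P_b=\sum_{n=0}^{b}\binom{b}{n}\mathcal{Z}_{\rm a}^{b-n}\,2^{n}\xi^{-n}M_{-n}
\]
from Theorem~1 at $b=1$ and $b=2$. Setting $b=1$ gives the claimed $P_1=\mathcal{Z}_{\rm a}+2\xi^{-1}M_{-1}$ directly; setting $b=2$ and using $M_0=1$ yields the three-term expression for $P_2$. The variance then follows from $\mathtt{Var}=P_2-P_1^{2}$: expanding $(\mathcal{Z}_{\rm a}+2\xi^{-1}M_{-1})^{2}$ and subtracting from $P_2$ produces an exact cancellation of the $\mathcal{Z}_{\rm a}^{2}$ and $\mathcal{Z}_{\rm a}\xi^{-1}M_{-1}$ terms, leaving $4\xi^{-2}(M_{-2}-M_{-1}^{2})$.

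The second part is to simplify the two series
\[
C_{\zeta_o}(-n)=\sum_{m=1}^{\infty}\binom{-n}{m}\binom{\delta-1}{m-1}\bar{p}_m^{\rm D},\qquad n=1,2,
\]
using $\bar{p}_m^{\rm D}=\mathbb{E}[\zeta_o^{m}]$ and the generalized-binomial identity $\binom{-n}{m}=(-1)^{m}\binom{n+m-1}{m}$. For $n=1$ this gives $\binom{-1}{m}=(-1)^{m}$, so after reindexing $k=m-1$ and swapping sum and expectation,
\[
C_{\zeta_o}(-1)=-\mathbb{E}\!\left[\zeta_o\sum_{k=0}^{\infty}\binom{\delta-1}{k}(-\zeta_o)^{k}\right]=-\mathbb{E}\!\left[\zeta_o(1-\zeta_o)^{\delta-1}\right],
\]
by recognizing the generalized binomial series for $(1-\zeta_o)^{\delta-1}$ (valid since $\zeta_o\le\xi<1$, so termwise-to-expectation exchange is justified by dominated convergence).

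For $n=2$ I use $\binom{-2}{m}=(-1)^{m}(m+1)$ and split the inner sum as $(m+1)=k+2$ after the shift $k=m-1$. The $k=0$ part again yields $(1-\zeta_o)^{\delta-1}$, while the $k$-part is the derivative-type series
\[
\sum_{k=0}^{\infty}k\binom{\delta-1}{k}(-\zeta_o)^{k}=-\zeta_o(\delta-1)(1-\zeta_o)^{\delta-2},
\]
obtained from differentiating $(1+x)^{\delta-1}$ and substituting $x=-\zeta_o$. Combining, $C_{\zeta_o}(-2)=(\delta-1)\mathbb{E}[\zeta_o^{2}(1-\zeta_o)^{\delta-2}]-2\mathbb{E}[\zeta_o(1-\zeta_o)^{\delta-1}]$. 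To match the stated form, I would rewrite $\zeta_o^{2}=\zeta_o-\zeta_o(1-\zeta_o)$ inside the first expectation; this splits $(\delta-1)\mathbb{E}[\zeta_o^{2}(1-\zeta_o)^{\delta-2}]$ into $(\delta-1)\mathbb{E}[\zeta_o(1-\zeta_o)^{\delta-2}]-(\delta-1)\mathbb{E}[\zeta_o(1-\zeta_o)^{\delta-1}]$, and merging the $(1-\zeta_o)^{\delta-1}$ terms produces the coefficient $-(\delta+1)$ as required. The only delicate point is justifying the term-by-term differentiation and the interchange of sum and expectation in these binomial series, which follows since $|\zeta_o|\le\xi<1$ almost surely and the binomial series for $(1-x)^{\delta-1}$ and $(1-x)^{\delta-2}$ converge absolutely on that range.
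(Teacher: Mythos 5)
Your proof is correct and follows the same top-level structure as the paper: $P_1$, $P_2$, and $\mathtt{Var}$ are obtained by specializing Theorem~1 at $b=1,2$ (using $M_0=1$), and $C_{\zeta_o}(-1)$ is obtained by the same direct reindexing of the binomial series. The only place where your route genuinely differs is the simplification of $C_{\zeta_o}(-2)$. You shift the index, split the coefficient as $(k+2)=k+2$, identify the $k$-weighted part $\sum_k k\binom{\delta-1}{k}(-\zeta_o)^k$ with a derivative of the binomial series to get $(\delta-1)\,\mathbb{E}[\zeta_o^2(1-\zeta_o)^{\delta-2}]-2\,\mathbb{E}[\zeta_o(1-\zeta_o)^{\delta-1}]$, and then collapse the $\zeta_o^2$ term with the algebraic identity $\zeta_o^2=\zeta_o-\zeta_o(1-\zeta_o)$. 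The paper instead works directly at the level of the binomial coefficients: it rewrites $\binom{\delta-1}{m-1}=(-1)^{m-1}\binom{m-1-\delta}{m-1}$, splits $(m+1)=(m-\delta)+(\delta+1)$, and uses the Gamma-function identity $(m-\delta)\frac{\Gamma(m-\delta)}{\Gamma(1-\delta)\Gamma(m)}=(1-\delta)\binom{m-\delta}{m-1}$ to land on the two negative-binomial generating series $(1-\zeta_o)^{\delta-2}$ and $(1-\zeta_o)^{\delta-1}$ in a single resummation, never producing a $\zeta_o^2$ factor. Both routes are valid and of comparable length; your differentiation-plus-algebra route is perhaps the more standard generating-function manoeuvre, while the paper's $(m-\delta)+(\delta+1)$ split is a little slicker because it avoids the intermediate $\zeta_o^2$ expectation entirely. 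Your remark on the convergence of the binomial series and the interchange of sum and expectation (justified by $\zeta_o\le\xi<1$) is a point the paper leaves implicit, so that is a welcome addition rather than an omission.
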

\begin{proof}
Please refer to Appendix \ref{app:AoI_M12_Q1} for the proof.
\end{proof}
  \begin{remark}
For the mean PAoI given in \eqref{eq:Mean_Peak_AoI}, the first term captures the impact of the update arrival rate $\lambda_\text{a}$ whereas the second term depends on the inverse mean of the conditional success probability, which captures the impact of the wireless link parameters such as the link distance $R$, network density  $\lambda_\text{sd}$, and path-loss exponent $\alpha$. 
 { Furthermore, it is worth noting that the variance of the temporal mean AoI is independent of the arrival rate and entirely depends on the link quality parameters.  This is because the arrival rate is assumed to be the same for all SD links and it corresponds to the additive term in conditional mean PAoI given in \eqref{eq:Mean_Peak_Age_Given_Phi}. }
\end{remark}
Now,  using the beta approximation of the conditional success probability presented in Lemma \ref{lemma:moments_SuccessProb}, we determine the distribution of $\bar{A}_{\rm NP}(\beta;\Phi)$  in the following corollary.
\begin{cor}
\label{cor:Distribution_AoI_I}
{Under the beta approximation, the  $\cdf$ of the conditional mean PAoI for {non-preemptive queueing discipline} is
\begin{align*}
{F}(x;\beta)= 1-{B_{2\xi^{-1}(x-\mathcal{Z}_{\rm a})^{-1}}}(\kappa_1,\kappa_2),
\numberthis\label{eq:AoIDistribition_1}
\end{align*}   }
where  $\kappa_1$ and $\kappa_2$ are obtain using Lemma \ref{lemma:moments_SuccessProb} by setting $\bar{p}_m=\bar{p}_m^{\rm D}$.
\end{cor}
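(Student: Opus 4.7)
The proof will be essentially a direct monotone change of variable applied to the meta distribution approximation from Lemma \ref{lemma:moments_SuccessProb}, specialized to the second-degree dominant system. The plan is to express the event $\{\bar{A}_{\rm NP}(\beta;\Phi)\leq x\}$ as a tail event on $\mu_\Phi$, and then substitute the beta approximation.

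First, I would recall from \eqref{eq:Mean_Peak_Age_Given_Phi} that the conditional mean PAoI is
\begin{equation*}
\bar{A}_{\rm NP}(\beta;\Phi)=\mathcal{Z}_{\rm a}+\frac{2}{\xi\mu_\Phi},
\end{equation*}
which is a strictly decreasing function of $\mu_\Phi\in(0,1]$. Inverting this relation gives the equivalence
\begin{equation*}
\bar{A}_{\rm NP}(\beta;\Phi)\leq x \;\;\Longleftrightarrow\;\; \mu_\Phi \geq \frac{2}{\xi(x-\mathcal{Z}_{\rm a})},
\end{equation*}
valid whenever $x>\mathcal{Z}_{\rm a}$ (otherwise the event is empty since the PAoI cannot be smaller than $\mathcal{Z}_{\rm a}$). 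Taking probabilities on both sides yields
\begin{equation*}
F(x;\beta)=\mathbb{P}\!\left[\mu_\Phi\geq \tfrac{2}{\xi(x-\mathcal{Z}_{\rm a})}\right] = {\rm D}\!\left(\beta,\tfrac{2}{\xi(x-\mathcal{Z}_{\rm a})}\right),
\end{equation*}
where I use the meta distribution definition and the fact that $\mu_\Phi$ has no atom, so $\mathbb{P}[\mu_\Phi\geq y]=\mathbb{P}[\mu_\Phi>y]$.

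At this point I would invoke Lemma \ref{lemma:moments_SuccessProb}, which gives the beta approximation ${\rm D}(\beta,x)\approx 1-B_x(\kappa_1,\kappa_2)$. Substituting $y=\tfrac{2}{\xi(x-\mathcal{Z}_{\rm a})}$ into this approximation produces precisely \eqref{eq:AoIDistribition_1}. The only care needed here is in the choice of the moments that feed into $\kappa_1,\kappa_2$: since the statement of Corollary \ref{cor:Distribution_AoI_I} is a bound on the mean-PAoI distribution (not on the success-probability distribution in the original network), I must evaluate $M_1,M_2$ in \eqref{eq:Beta_parameters} under the second-degree modified system of Step 2 in Section \ref{subsec:TwoStepAnalysis}, i.e., by setting $\bar{p}_m=\bar{p}_m^{\rm D}$ with $\bar{p}_m^{\rm D}$ given by \eqref{eq:Activity_Moments_DomSys}.

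There is essentially no hard technical obstacle; the only subtle point is justifying in what sense this is a genuine bound. Since Step 2 overestimates the activity of every interferer, it stochastically lower-bounds $\mu_\Phi$, and hence the induced beta-approximated tail $1-B_y(\kappa_1,\kappa_2)$ lower-bounds $\mathbb{P}[\mu_\Phi>y]$ (up to the quality of the beta fit). Composed with the decreasing map $\mu_\Phi\mapsto \mathcal{Z}_{\rm a}+2/(\xi\mu_\Phi)$, this yields an upper bound on the CDF of $\bar{A}_{\rm NP}$, consistent with the moment bounds already established in Theorem 1 and Corollary \ref{cor:AoI_M12_Q1}. I would note this interpretation briefly and finish.
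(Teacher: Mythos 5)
Your argument is correct and is essentially the paper's own proof: invert $\bar{A}_{\rm NP}=\mathcal{Z}_{\rm a}+2/(\xi\mu_\Phi)$ to a tail event on $\mu_\Phi$, identify it with the meta distribution, and substitute the beta approximation with $\kappa_1,\kappa_2$ evaluated at $\bar{p}_m=\bar{p}_m^{\rm D}$. One small slip in your closing interpretive remark: since the dominant construction stochastically \emph{lowers} $\mu_\Phi$ and the map $\mu_\Phi\mapsto\mathcal{Z}_{\rm a}+2/(\xi\mu_\Phi)$ is decreasing, the resulting $\bar{A}_{\rm NP}$ is stochastically \emph{larger}, so what you obtain is a \emph{lower} bound on the CDF $F(x;\beta)$ (equivalently an upper bound on the CCDF $\bar F$), which is what is actually consistent with the moment upper bounds of Theorem 1 and Corollary \ref{cor:AoI_M12_Q1}.
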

\begin{proof}
Using \eqref{eq:Mean_Peak_Age_Given_Phi} and the beta approximation of the distribution of $\mu_\Phi$ given in Lemma \ref{lemma:moments_SuccessProb}, we can determine the distribution of $\bar{A}_1(\beta;\Phi)$ as
\begin{align*}
\P[\bar{A}_{\rm NP}(\beta;\Phi)\leq x]=\P\left[\mu_\Phi\geq 2\xi^{-1}(x-\mathcal{Z}_{\rm a})^{-1}\right]{={\rm D}(2\xi^{-1}(x-\mathcal{Z}_{\rm a})^{-1})}.
\end{align*} 
 Further, using  the beta approximation for the distribution of $\mu_\Phi$ (given in Lemma \ref{lemma:moments_SuccessProb}), we obtain \eqref{eq:AoIDistribition_1}. The parameters $\kappa_1$ and $\kappa_2$ need to be determined for  $\bar{p}_m=\bar{p}_m^{\rm D}$ (refer to  Section \ref{subsec:TwoStepAnalysis}). 
\end{proof}

\section{ AoI {under} preemptive $\mathtt{Geo/Geo/1}$ queue}
\label{sec:TypeII}
In preemptive discipline, each source  transmits the most recent update available at the source in a given transmission slot.  As a result, this  queueing discipline  helps to reduce the AoI which is clearly significant when the update arrival rate is high while the delivery rate is low. 
This discipline is, in fact, optimal from the perspective of minimizing AoI at the destination as the source always ends up transmitting its most recent update arrival.  
 A representative sample path of the  AoI process  under the preemptive queue discipline is presented in Fig. \ref{Fig:SamplePath_2}.  The red upward  and blue downward arrow marks indicate the time instants 
 of update arrivals at the source and deliveries at  the destination, respectively.   The red plus sign marks indicate the instances of replacing the older update with a newly arrived update and $t_{k,n}$ highlighted in red indicates the $n$-th replacement of the $k$-th update. 
\begin{figure}[h]
\centering
 \includegraphics[clip, trim=6.1cm 8.8cm 1cm 6.5cm, width=0.6\textwidth]{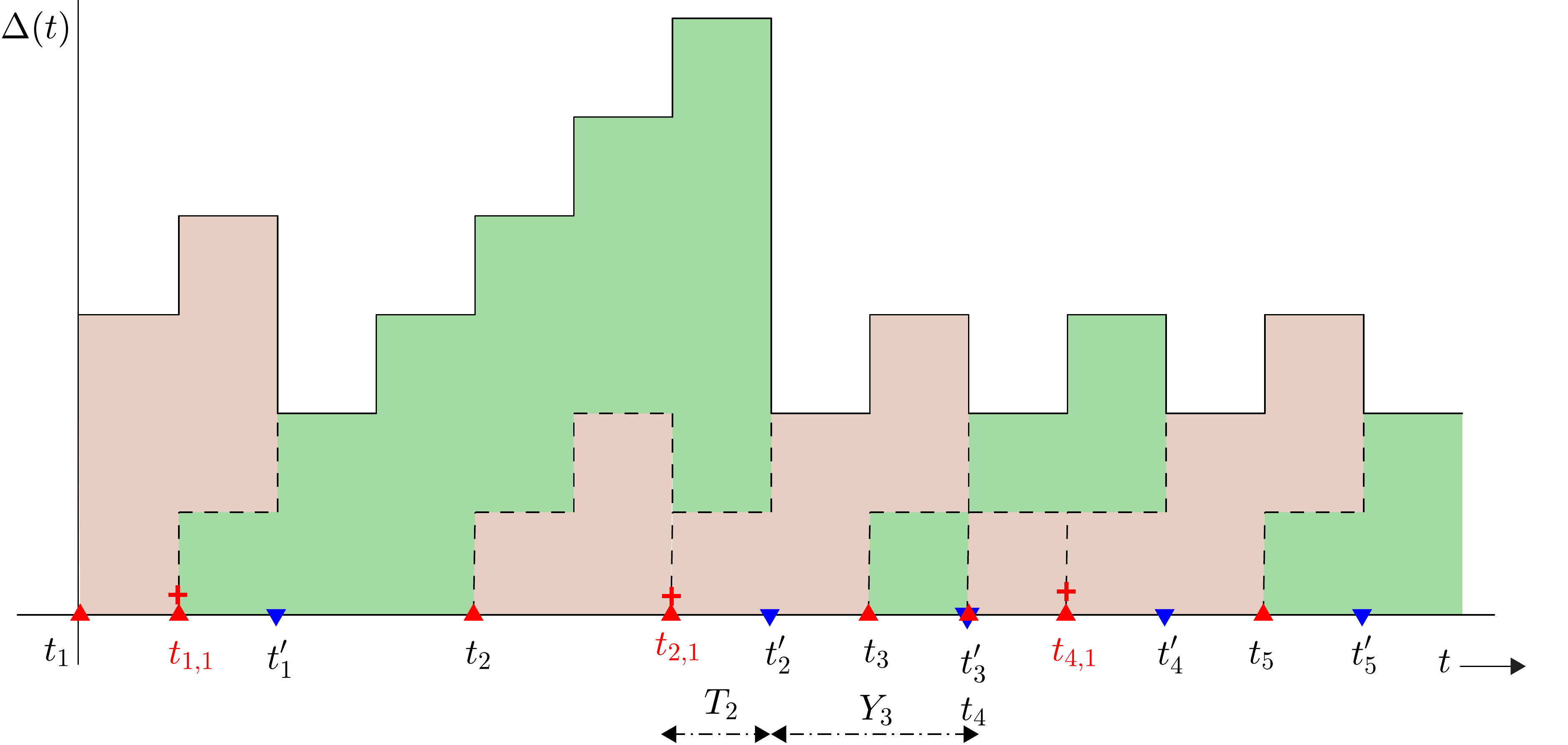}
 \caption{Sample path of the AoI $\Delta(t)$ under the preemptive queue discipline.}
 \label{Fig:SamplePath_2}
\end{figure}
\subsection{Conditional Mean PAoI }
In this subsection, we derive the conditional PAoI for the typical destination for a given $\Phi$. For this, the primary step is to obtain the means of the inter-departure time $Y_k$ and service time $T_k$ for a given conditional success probability $\mu_\Phi$ (measured at the typical destination). While it is quite straightforward to determine the mean of $Y_k$, the derivation of the mean of $T_k$ needs careful consideration of  the successive replacement of updates until the next successful transmission occurs.

{\em Mean of $Y_k$:} Recall that the expected time for a new arrival is {${\cal Z}_{\rm a}=\frac{1}{\lambda_{\rm a}}-1$}. Also, recall that the update delivery rate is $\xi\mu_\Phi$ which follows from the fact that both the successful transmission and the medium access  are independent events across transmission slots. Thus, since  the transmission starts from the new arrival of the update after the successful departure, the mean time between two departures simply becomes
\begin{equation}
\E[Y_k]={\cal Z}_{\rm a}+\frac{1}{\xi\mu_\Phi}.
\label{eq:Mean_Yk_2}
\end{equation}

{\em Mean and Distribution of Delivery Times:} The  delivery times  restart after every new update arrival before epochs of successful delivery. \begin{figure}[h]
\centering
\includegraphics[clip, trim=8.3cm 10.3cm 10cm 17.2cm, width=0.65\textwidth]{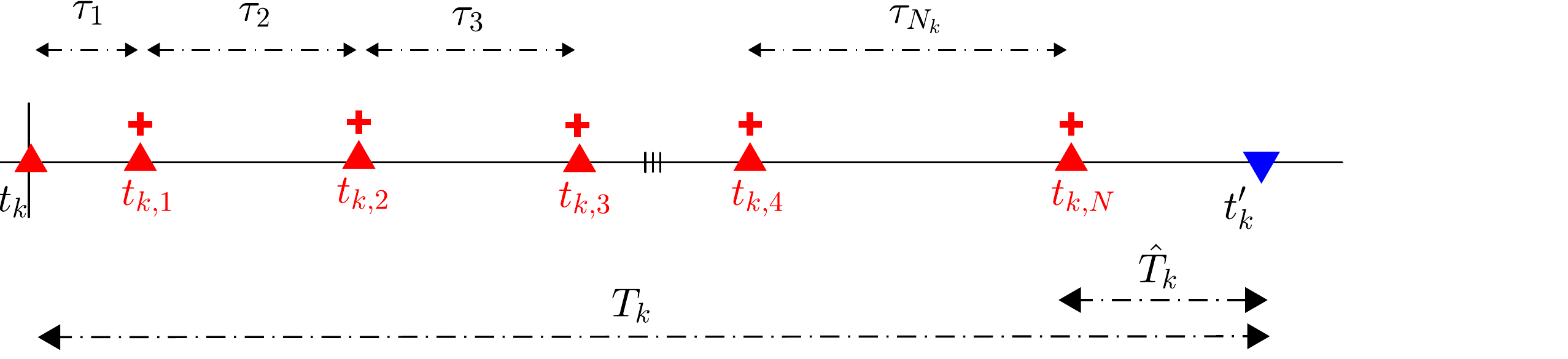}
 \caption{Illustration of the process $\hat{T}_k$ under preemption.}
 \label{Fig:Illustration_Tk}
\end{figure} 
 { Let $\hat{T}_k$ be the delivery times of the latest update under preemptive queueing disciple.} Fig. \ref{Fig:Illustration_Tk} illustrates the typical realization of random process $\hat{T}_k$.
The number of slots required to deliver the latest update, i.e.  $\hat{T}_k$, is given by
\begin{equation*}
\hat{T}_k=T_k-X_{N_k},
\end{equation*}
 where $X_{N_k}=\sum_{i=1}^{N_k}\tau_i$, and $N_k$ is the number of new update arrivals occurring between the arrival of the $k$-th update (right after ($k-1$)-th successful transmission) and the $k$-th successful transmission, and  $\tau_n$ is the number of slots between the arrivals of two successive updates. 
{ \begin{lemma}
 \label{lemma:Tk_pmf}
 For a zero buffer queue with preemption, the $\pmf$ of number of slots required to deliver the latest update follows a geometric distribution as given below
 \begin{equation}
 \P[\hat{T}_k=m]=
 q_{\rm s}(1-q_{\rm s})^{m-1},~~~~~~\text{~for~}m=1,2,\dots
\label{eq:pmf_Tk}
 \end{equation}
 where $q_{\rm s}=\xi\mu_\Phi + \lambda_\text{a}(1-\xi\mu_\Phi)$.
 \end{lemma}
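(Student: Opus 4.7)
The plan is to derive the pmf of $\hat{T}_k$ by first computing the joint probability of the event $\{\hat{T}_k=m,\ \text{update is eventually delivered (not preempted)}\}$ and then normalize by the probability of delivery, since by construction $\hat{T}_k$ is recorded only for updates that are successfully delivered (i.e., those indexed by $k$).

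First I fix the realization $\Phi$ so that the per-slot success probability $\xi\mu_\Phi$ and the per-slot Bernoulli arrival probability $\lambda_{\rm a}$ are fixed, and both transmission outcomes and arrivals are mutually independent across slots (the former by the independent Rayleigh fading assumed in each slot, the latter by the assumed Bernoulli sampling). Shifting time so that the latest update (the one that ends up being delivered) arrives in slot $1$ and is received in slot $m$, the event $\{\hat{T}_k=m\}$ requires: (i) the transmissions in slots $1,\dots,m-1$ all fail; (ii) no fresh update arrives at the start of any of slots $2,\dots,m$, as otherwise preemption would occur and a different, later update would be the delivered one; and (iii) the transmission in slot $m$ succeeds. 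Independence of all these outcomes gives
\begin{align*}
\mathbb{P}[\hat{T}_k=m,\ \text{delivered}\mid\Phi]=(1-\xi\mu_\Phi)^{m-1}(1-\lambda_{\rm a})^{m-1}\xi\mu_\Phi=(1-q_{\rm s})^{m-1}\xi\mu_\Phi,
\end{align*}
where I use the identity $1-q_{\rm s}=(1-\xi\mu_\Phi)(1-\lambda_{\rm a})$.

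Summing over $m\geq 1$ yields the marginal $\mathbb{P}[\text{delivered}\mid\Phi]=\xi\mu_\Phi/q_{\rm s}$, and dividing produces the claimed pmf
\begin{align*}
\mathbb{P}[\hat{T}_k=m\mid\Phi]=q_{\rm s}(1-q_{\rm s})^{m-1},\qquad m=1,2,\dots
\end{align*}

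The main subtlety is the conditioning step: $\hat{T}_k$ is by definition the sojourn of an update tied to a \emph{successful} delivery, so the argument must restrict to transmission epochs that terminate in success rather than in preemption, which is exactly what the $(1-\lambda_{\rm a})^{m-1}$ factor enforces. An equivalent and perhaps cleaner route is to observe that within any transmission epoch each slot terminates the epoch with probability $q_{\rm s}=\xi\mu_\Phi+\lambda_{\rm a}(1-\xi\mu_\Phi)$ (the union of ``success'' and ``failure followed by a new arrival''), so the epoch length is geometric with parameter $q_{\rm s}$; by the memoryless property of the geometric distribution, the conditional epoch length given that the terminating event is a success is the same unconditional geometric, yielding the same pmf.
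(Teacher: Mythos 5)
Your proof is correct but takes a genuinely different and cleaner route than the paper's. The paper first conditions on $T_k$ (the total sojourn from the first arrival after the $(k-1)$-th delivery to the $k$-th delivery), computes $\P[\hat{T}_k=m\mid T_k=s]$ piecewise using the negative binomial pmf of $X_n$, and then marginalizes over $s$; the geometric form of the answer only emerges after a nontrivial telescoping of sums. You instead work directly with the renewal structure of attempt epochs: in each slot the current epoch is terminated (either by success or by preemption) with constant probability $q_{\rm s}=\xi\mu_\Phi+\lambda_{\rm a}(1-\xi\mu_\Phi)$, and given that it terminates, the terminating event is a success with probability $\xi\mu_\Phi/q_{\rm s}$ regardless of the slot index, so epoch length and termination type are independent. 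Since $\hat{T}_k$ is the length of the first epoch that ends in success and the epochs are i.i.d., it inherits the unconditional geometric law with parameter $q_{\rm s}$. This is shorter, makes the formula for $q_{\rm s}$ self-evident, and sidesteps the paper's summation machinery. Two minor points of rigor you may wish to spell out: (i) the quantity you call $\P[\hat{T}_k=m,\ \text{delivered}]$ is really the joint law of a generic epoch's length $L$ and its termination type, and identifying $\hat{T}_k$ with $L$ conditioned on success relies on the i.i.d.\ epoch structure (the standard first-success argument); (ii) the closing appeal to ``memorylessness'' is more precisely an appeal to the independence of epoch length and termination type, a slightly different structural fact than the memoryless property of a single geometric.
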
}
 \begin{proof}
 Please refer to the Appendix \ref{app:Tk_pmf} for the proof. 
 \end{proof}
{
Now, the following theorem presents the moment generating function ($\mgf$) of the conditional PAoI $A_k$ for a given $\Phi$.
\begin{theorem}
\label{thm:Peak_AoI_queue2_mgf}
For a zero buffer queue with preemption, the $\mgf$ of the conditional PAoI   is 
\begin{equation}
    \ncalM_{{\rm P},A_k}(t)=\frac{\lambda_{\rm a}\xi\mu_\Phi q_{\rm s}e^{2t}}{(1-(1-\lambda_{\rm a})e^t)(1-(1-\xi\mu_\Phi)e^t)(1-(1-q_{\rm s})e^t)},
    \label{eq:Peak_conditional_AoI_mgf}
\end{equation}
and the mean of the conditional PAoI is
\begin{align*}
\bar{A}_{\rm P}(\beta;\Phi)&=\ncalZ_{\text{a}}+\frac{1}{\xi\mu_\Phi} + \frac{1}{q_{\rm s}},\numberthis\label{eq:Conditional_Mean_Peak_AoI_2}
\end{align*} 
 where $q_{\rm s}=\xi\mu_\Phi + \lambda_\text{a}(1-\xi\mu_\Phi)$.
\end{theorem}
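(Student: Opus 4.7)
\emph{Proof proposal.} My plan is to decompose the conditional peak age $A_k=T_{k-1}+Y_k$ into three independent geometric summands, compute their individual MGFs, and multiply; the mean then follows either by summing means or by differentiating the MGF at $t=0$.

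First I would carefully reinterpret $T_{k-1}$ in the preemptive setting. Because only the latest replacement is in the server at the moment of a successful delivery, the arrival time $t_{k-1}$ associated with the $(k-1)$-th delivered update is (by the convention implicit in Definition~\ref{def:Peak_AoI}) the time of the most recent replacement before $t'_{k-1}$, not the first arrival of the chain. Thus $T_{k-1}$ coincides with the quantity $\hat T_{k-1}$ studied in Lemma~\ref{lemma:Tk_pmf}, so that $T_{k-1}\sim\mathrm{Geom}(q_{\rm s})$ on $\{1,2,\ldots\}$ with MGF $q_{\rm s}e^{t}/(1-(1-q_{\rm s})e^{t})$.

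Next I would decompose $Y_k=V_k+W_k$, where $V_k$ is the idle gap from $t'_{k-1}$ to the first new arrival and $W_k$ is the elapsed time from that first arrival to $t'_k$. By the Bernoulli$(\lambda_{\rm a})$ arrival assumption, $V_k$ has the pmf given in \eqref{eq:Vk_pmf} with MGF $\lambda_{\rm a}/(1-(1-\lambda_{\rm a})e^{t})$. For $W_k$, the key observation is that under preemption the per-slot successful delivery probability equals $\xi\mu_\Phi$ regardless of how many replacements have occurred in the interim, because the channel event and the medium-access event are i.i.d.\ across slots and independent of which version of the update is in the server. Consequently $W_k$ is geometric on $\{1,2,\ldots\}$ with parameter $\xi\mu_\Phi$ and MGF $\xi\mu_\Phi e^{t}/(1-(1-\xi\mu_\Phi)e^{t})$. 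Independence of $\hat T_{k-1}$, $V_k$, and $W_k$ follows because they are measurable with respect to disjoint time intervals, and the arrival and channel processes are i.i.d.\ across slots. Multiplying the three MGFs then yields \eqref{eq:Peak_conditional_AoI_mgf}.

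For the mean, rather than differentiating I would simply add the expectations $\E[\hat T_{k-1}]=1/q_{\rm s}$, $\E[V_k]=(1-\lambda_{\rm a})/\lambda_{\rm a}=\ncalZ_{\rm a}$, and $\E[W_k]=1/(\xi\mu_\Phi)$, which gives \eqref{eq:Conditional_Mean_Peak_AoI_2} directly.

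The step that needs the most care is the identification $T_{k-1}=\hat T_{k-1}$, because the notation $T_k=t'_k-t_k$ from Definition~\ref{def:Peak_AoI} was introduced in the non-preemptive context, and one must justify that under preemption $t_{k-1}$ denotes the time of the last replacement rather than the first arrival of the chain (otherwise one would mistakenly use $W_{k-1}\sim\mathrm{Geom}(\xi\mu_\Phi)$ for $T_{k-1}$ and obtain the wrong formula). Beyond this conceptual point, the remainder is straightforward bookkeeping with independent geometric MGFs.
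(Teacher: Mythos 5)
Your proposal is correct and follows essentially the same route as the paper: the paper also decomposes $A_k=\hat T_{k-1}+V_k+T_k$ (your $W_k$ is exactly the paper's $T_k$), invokes Lemma~\ref{lemma:Tk_pmf} for the geometric law of $\hat T_{k-1}$ with parameter $q_{\rm s}$, and multiplies the three geometric MGFs, adding the means for \eqref{eq:Conditional_Mean_Peak_AoI_2}. Your explicit justification that $T_{k-1}$ in Definition~\ref{def:Peak_AoI} should be read as $\hat T_{k-1}$ under preemption is a point the paper leaves implicit, and it is the right reading.
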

\begin{proof}
From Definition \ref{def:Peak_AoI}, the conditional PAoI can be written as $$A_k=\hat{T}_{k-1}+Y_k=\hat{T}_{k-1}+V_k+T_k.$$ Since $\hat{T}_{k-1}$, $V_k$, and $T_k$ are independent of each other and also themselves across $k$, the $\mgf$ of $A_k$ for a given $\Phi$ becomes
\begin{equation*}
    \ncalM_{A_k}(t)=\ncalM_{\hat{T}_k}(t)\ncalM_{A_k}(t)\ncalM_{T_k}(t).
\end{equation*}
We have $\hat{T}_k\sim\mathtt{Geo}(q_{\rm s})$ for $\hat{T}_k\geq 1$ (from Lemma \ref{lemma:Tk_pmf} ),  $V_k\sim\mathtt{Geo}(\lambda_{a})$ for $V_k\geq 0$ (from \eqref{eq:Vk_pmf}), and $T_k\sim\mathtt{Geo}(\xi\mu_\Phi)$ for $T_k\geq 1$ (from \eqref{eq:Tk_pmf}). Therefore,  \eqref{eq:Peak_conditional_AoI_mgf} follows by substituting the $\mgf$s of geometric distributions with above parameters. 
Next, \eqref{eq:Conditional_Mean_Peak_AoI_2} directly follows by substituting the means of $\hat{T}_{k}$, $V_k$, and $T_k$ in $A_k$.
\end{proof}}
\begin{remark} 
{ From \eqref{eq:Mean_Peak_Age_Given_Phi} and \eqref{eq:Conditional_Mean_Peak_AoI_2}, it is evident that the conditional mean PAoI observed by the typical SD link with a preemptive queue is strictly less than that with a  non-preemptive queue  for any $\lambda_{\rm a}$ as long as $\xi\mu_\Phi<1$. } 
From \eqref{eq:Mean_Peak_Age_Given_Phi} and \eqref{eq:Conditional_Mean_Peak_AoI_2}, we can verify that 
\begin{align*}
\lim_{\lambda_{\rm a}\to 1}\bar{A}_{\rm NP}(\beta;\Phi)=\frac{2}{\xi\mu_\Phi}\text{~~and~~}\lim_{\lambda_{\rm a}\to 1}\bar{A}_{\rm P}(\beta;\Phi)=1+\frac{1}{\xi\mu_\Phi}.
\end{align*}
From this, we can say that   preemptive  discipline reduces the mean  peak  AoI almost by a factor of two compared to the non-preemptive discipline in the asymptotic regime of $\lambda_{\rm a}$ when $\xi\mu_\Phi$ is low.
{  Further, we can also verify that 
    \begin{equation*}
\lim_{\xi\mu_\Phi\to 1}\bar{A}_{\rm NP}(\beta;\Phi)=\lim_{\xi\mu_\Phi\to 1}\bar{A}_{\rm P}(\beta;\Phi)={\cal Z}_{\rm a}+2.
\end{equation*}
    This implies that the sources observing high successful transmission probability can select any one of these queueing disciplines without much compromising on the mean PAoI when $\xi=1$. Besides, note that this  performance analysis also holds for the case where the sources can  selectively opt for any one of these queueing disciplines. This is essentially because of the same transmission activities for any given source under both of these queues, as will discussed next. }
\end{remark}

\subsection{Activity and Distribution of $\mu_\Phi$}
\label{subsec:Activity_Dist_mu}
As the preemptive queue just replaces the older updates with the newly arriving updates during the retransmission instances, its transmission rate  is the same as that of  the  non-preemptive queue. In fact, the state diagrams for both these queues are equivalent (please refer to Fig. \ref{Fig:StateDiagram}). As a result,  the conditional steady state distributions (and thus the conditional activities),  for a given  $\mu_\Phi$, are also the same for both of these queues and are given by \eqref{eq:SteadyState}.  

 The  point process of  interferers is  characterized by their activities, which is the same  for both queues. As a result, the distributions of the conditional success probability $\mu_\Phi$ must also be the same for both of these queues. Based on these arguments, the moments and approximate  distribution of the conditional success probability $\mu_\Phi$ presented in Lemma \ref{lemma:moments_SuccessProb} along with the distribution of transmission activities of interfering sources given in \eqref{eq:activity_CDF_Dom_Sys} can be directly extended for the  analysis of AoI under preemptive queue. 
\subsection{Moments and Distribution of $\bar{A}_{\rm P}(\beta;\Phi)$ }  
In this subsection, we  derive  bounds on the $b$-th moment and the distribution of $\bar{A}_{\rm P}(\beta;\Phi)$ using the two-step analysis of conditional success probability presented in Section \ref{subsec:TwoStepAnalysis}.  
\begin{theorem}
\label{thm:AoI_Mb_Q2}
The upper bound of the $b$-th moment of the conditional mean PAoI for  { preemptive queueing discipline with no storage }  is
\begin{align}
P_b=\sum_{l+m+n=b}{b\choose l,m,n} {\cal Z}_{\rm a}^l\xi^{-m}S(n;m),
\label{eq:TypeII_Qb}
\end{align}
where { ${b\choose l,m,n}=\frac{b!}{l!m!n!}$},
\begin{align}
S(n;m)&=\sum_{k=0}^\infty {n+k-1 \choose k} (1-\lambda_{\rm a})^k\sum_{l=0}^k (-1)^l{k \choose l} \xi^lM_{l-m},\label{eq:Snm}
\end{align}
and $M_l$ is evaluated using Lemma \ref{lemma:moments_SuccessProb} by setting $\bar{p}_m=\bar{p}_m^{\rm D}$ which is given in \eqref{eq:Activity_Moments_DomSys}.  
 \end{theorem}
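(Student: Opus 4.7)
The plan is to start from the conditional mean PAoI formula
\[
\bar{A}_{\rm P}(\beta;\Phi)=\ncalZ_{\text{a}}+\frac{1}{\xi\mu_\Phi}+\frac{1}{q_{\rm s}},
\]
given in Theorem~1, raise it to the $b$-th power, take the expectation over $\Phi$, and then invoke the two-step dominant-system construction of Section~\ref{subsec:TwoStepAnalysis} to convert the resulting equality into the desired upper bound. The trinomial expansion gives
\[
\bar{A}_{\rm P}(\beta;\Phi)^b=\sum_{l+m+n=b}\binom{b}{l,m,n}\,\ncalZ_{\text{a}}^{\,l}\,(\xi\mu_\Phi)^{-m}\,q_{\rm s}^{-n},
\]
so after taking $\E_\Phi[\cdot]$ the only nontrivial piece to evaluate is $\E_\Phi[\mu_\Phi^{-m}q_{\rm s}^{-n}]$, which I will show equals $\xi^{m}S(n;m)$.

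The main obstacle is handling $q_{\rm s}^{-n}$, because $q_{\rm s}=\xi\mu_\Phi+\lambda_{\rm a}(1-\xi\mu_\Phi)$ is an affine function of $\mu_\Phi$ and the moments of $\mu_\Phi$ are only available via Lemma~\ref{lemma:moments_SuccessProb}. The key algebraic step is to rewrite
\[
q_{\rm s}=1-(1-\lambda_{\rm a})(1-\xi\mu_\Phi),
\]
which one verifies by direct expansion. Since $(1-\lambda_{\rm a})(1-\xi\mu_\Phi)\in[0,1)$ for $\lambda_{\rm a},\xi\mu_\Phi\in(0,1]$, the generalized binomial series converges and yields
\[
q_{\rm s}^{-n}=\sum_{k=0}^{\infty}\binom{n+k-1}{k}(1-\lambda_{\rm a})^{k}(1-\xi\mu_\Phi)^{k}.
\]
Expanding $(1-\xi\mu_\Phi)^{k}$ by the ordinary binomial theorem, multiplying by $\mu_\Phi^{-m}$, and interchanging the (absolutely convergent) sum with the expectation produces
\[
\E_\Phi\bigl[\mu_\Phi^{-m}q_{\rm s}^{-n}\bigr]=\sum_{k=0}^{\infty}\binom{n+k-1}{k}(1-\lambda_{\rm a})^{k}\sum_{l=0}^{k}(-1)^{l}\binom{k}{l}\xi^{l}\,\E_\Phi[\mu_\Phi^{\,l-m}],
\]
which becomes exactly $S(n;m)$ once $\E_\Phi[\mu_\Phi^{\,l-m}]$ is identified with $M_{l-m}$.

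Assembling these pieces gives
\[
P_b=\sum_{l+m+n=b}\binom{b}{l,m,n}\ncalZ_{\text{a}}^{\,l}\,\xi^{-m}S(n;m),
\]
as claimed. To upgrade the equality to an upper bound, I appeal to the two-step modification described in Section~\ref{subsec:TwoStepAnalysis}: the second-degree dominant system overestimates the activities of the interfering sources, so the conditional success probability $\mu_\Phi$ computed from Lemma~\ref{lemma:moments_SuccessProb} with $\bar{p}_m=\bar{p}_m^{\rm D}$ is a stochastic lower bound on the true $\mu_\Phi$. Since $\bar{A}_{\rm P}(\beta;\Phi)$ is a decreasing function of $\mu_\Phi$ (both $(\xi\mu_\Phi)^{-1}$ and $q_{\rm s}^{-1}$ decrease as $\mu_\Phi$ grows), the $b$-th moment obtained via Lemma~\ref{lemma:moments_SuccessProb} under the dominant-system activities is an upper bound on the true $P_b$, completing the proof. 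A minor technical point worth checking during the write-up is the interchange of summation and expectation, which follows from nonnegativity of the terms after absorbing the signs via Fubini-Tonelli on the paired inner sum.
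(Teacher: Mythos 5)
Your proof is correct and follows essentially the same route as the paper's: trinomial expansion of $\bar{A}_{\rm P}(\beta;\Phi)^b$, followed by the generalized binomial (negative binomial) series applied to $q_{\rm s}^{-n}$ after rewriting $q_{\rm s}=1-(1-\lambda_{\rm a})(1-\xi\mu_\Phi)$, then the ordinary binomial expansion of $(1-\xi\mu_\Phi)^k$, identification with $M_{l-m}$, and the monotonicity argument for the upper bound via the dominant system. The only cosmetic difference is that the paper writes $q_{\rm s}^{-n}=(1+(q_{\rm s}-1))^{-n}$ and evaluates $(q_{\rm s}-1)^k$ afterward, which is algebraically identical to your direct substitution.
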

\begin{proof}
Please refer to Appendix \ref{app:AoI_Mb_Q2} for the proof.
 \end{proof}
 
The general result presented above can be used to derive simple expressions for the  first two moments of $\bar{A}_2(\beta;\Phi)$, which are presented next.
 \begin{cor}
\label{cor:AoI_M12_Q2}
The upper bound of the first two moments of the conditional mean PAoI for { preemptive queueing discipline with no storage } are
 \begin{align}
P_1&={\cal Z}_\text{a}+\xi_{-1}M_{-1}+S(1;0),\label{eq:Mean_AoI_2}\\
P_2&=\ncalZ_{\text{a}}^2 + 2\ncalZ_{\text{a}}\xi^{-1}M_{-1}+\xi^{-2}M_{-2} + 2{\cal Z}_{\rm a}  S(1;0)+ 2\xi^{-1}S(1;1)+S(2;0),\label{eq:M2_AoI_2}
\end{align} 
where {$S(n;m)$ is given in \eqref{eq:Snm}} 
and $M_l=\exp\left(-\pi\lambda_\text{sd}\beta^\delta R^2\hat{\delta}C_{\zeta_o}(l)\right)$, such that
\begin{align*}
C_{\zeta_o}(l)=\begin{cases}
-\E\left[\zeta_o(1-\zeta_o)^{\delta-1}\right], &\text{for~}l=-1,\\
(\delta-1)\E\left[\zeta_o(1-\zeta_o)^{\delta-2}\right] - (\delta+1)\E\left[\zeta_o(1-\zeta_o)^{\delta-1}\right],&\text{for~}l=-2,\\
\sum_{m=1}^l {l \choose m}{\delta - 1 \choose m-1} \bar{p}_{m}^{\rm D}, &\text{for~}l>0,
\end{cases}
\end{align*} 
and the distribution of $\zeta_o$ is given in \eqref{eq:activity_CDF_Dom_Sys} and $\bar{p}_m^{\rm D}$ is given in \eqref{eq:Activity_Moments_DomSys}.   
 \end{cor}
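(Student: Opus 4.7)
The plan is to specialize Theorem \ref{thm:AoI_Mb_Q2} at $b=1$ and $b=2$ by enumerating the triples $(l,m,n)$ with $l+m+n=b$ in the multinomial sum of \eqref{eq:TypeII_Qb}. First I would observe that the quantity $S(n;m)$ defined in \eqref{eq:Snm} collapses when $n=0$: the binomial coefficient $\binom{n+k-1}{k}=\binom{k-1}{k}$ vanishes for every $k\geq 1$ and equals $1$ only at $k=0$, so the double sum over $k$ and its inner index reduces to the single term $M_{-m}$. Using this, the three contributions $(1,0,0)$, $(0,1,0)$, $(0,0,1)$ deliver the three terms of \eqref{eq:Mean_AoI_2}, and for $b=2$ the six multinomial coefficients (each equal to $1$ or $2$) assemble the six-term expression \eqref{eq:M2_AoI_2}. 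This step is purely combinatorial once the reduction of $S(0;m)$ is noted.

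The substantive part is to derive closed-form expressions for $C_{\zeta_o}(-1)$ and $C_{\zeta_o}(-2)$ from the series definition in Lemma \ref{lemma:moments_SuccessProb}. For negative integer arguments the ordinary binomial coefficients simplify as $\binom{-1}{m}=(-1)^m$ and $\binom{-2}{m}=(-1)^m(m+1)$. Writing $\bar{p}_m=\E[\zeta_o^m]$ and swapping the sum with the expectation (legitimate because $\zeta_o\in[0,\xi]\subset[0,1)$ almost surely, so the generalized binomial series in $\zeta_o$ converges absolutely), I would move the analysis inside the expectation. After the index shift $k=m-1$, the resulting series in powers of $(-\zeta_o)$ is exactly the generalized binomial expansion $\sum_{k\geq 0}\binom{\delta-1}{k}(-\zeta_o)^k=(1-\zeta_o)^{\delta-1}$, which yields the first identity $C_{\zeta_o}(-1)=-\E[\zeta_o(1-\zeta_o)^{\delta-1}]$.

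For $C_{\zeta_o}(-2)$, the extra factor $(m+1)$ decomposes the series into a pure-sum part that reproduces $-2\,\E[\zeta_o(1-\zeta_o)^{\delta-1}]$ and a $k$-weighted part. The latter is reduced by the identity $k\binom{\delta-1}{k}=(\delta-1)\binom{\delta-2}{k-1}$, and a further index shift identifies it as $-(\delta-1)\E[\zeta_o^{2}(1-\zeta_o)^{\delta-2}]$. The algebraic manipulation $\zeta_o^{2}(1-\zeta_o)^{\delta-2}=\zeta_o(1-\zeta_o)^{\delta-2}-\zeta_o(1-\zeta_o)^{\delta-1}$ then regroups everything into the stated form $(\delta-1)\E[\zeta_o(1-\zeta_o)^{\delta-2}]-(\delta+1)\E[\zeta_o(1-\zeta_o)^{\delta-1}]$. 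The positive-$l$ case of $C_{\zeta_o}(l)$ is immediate from Lemma \ref{lemma:moments_SuccessProb} since the defining sum truncates at $m=l$.

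The main obstacle is bookkeeping rather than ideas: justifying the interchange of summation and expectation, tracking the alternating signs through the index shifts, and recognising the right generalized binomial identity to collapse each intermediate series. Once those manipulations are in place, substituting $\bar{p}_m=\bar{p}_m^{\rm D}$ (as dictated by the dominant-system construction of Section \ref{subsec:TwoStepAnalysis}) completes the proof of the corollary.
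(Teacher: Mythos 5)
Your route is essentially the paper's: specializing Theorem~\ref{thm:AoI_Mb_Q2} to $b=1,2$ via the multinomial enumeration, together with the observation $S(0;m)=M_{-m}$, is the same computation the paper performs by directly expanding $\big(\ncalZ_{\text{a}}+(\xi\mu_\Phi)^{-1}+{\cal S}_{\rm a}^{-1}\big)^b$ and applying $\E[\mu_\Phi^{-m}{\cal S}_{\rm a}^{-n}]=S(n;m)$. Re-deriving $C_{\zeta_o}(-1)$ and $C_{\zeta_o}(-2)$ from the series is also legitimate; the paper simply cites Corollary~\ref{cor:AoI_M12_Q1}, whose appendix proof contains the identical series manipulation, and your positive-$l$ truncation remark is correct.

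There is, however, a concrete sign slip in the $C_{\zeta_o}(-2)$ step. After the shift $k=m-1$, the object inside the expectation is
$-\zeta_o\sum_{k\geq 0}(k+2){\delta-1 \choose k}(-\zeta_o)^k$.
The $k$-weighted piece of the inner sum is $\sum_{k\geq 1}k{\delta-1 \choose k}(-\zeta_o)^k=-(\delta-1)\zeta_o(1-\zeta_o)^{\delta-2}$, and the \emph{outer} factor $-\zeta_o$ flips the sign once more, so the $k$-weighted contribution is $+(\delta-1)\zeta_o^2(1-\zeta_o)^{\delta-2}$, not $-(\delta-1)\E\!\left[\zeta_o^2(1-\zeta_o)^{\delta-2}\right]$ as you write. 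With your sign the regrouping identity $\zeta_o^2(1-\zeta_o)^{\delta-2}=\zeta_o(1-\zeta_o)^{\delta-2}-\zeta_o(1-\zeta_o)^{\delta-1}$ yields $-(\delta-1)\E[\zeta_o(1-\zeta_o)^{\delta-2}]+(\delta-3)\E[\zeta_o(1-\zeta_o)^{\delta-1}]$, which is not the stated $C_{\zeta_o}(-2)$; with the corrected $+(\delta-1)$, the regrouping does produce $(\delta-1)\E[\zeta_o(1-\zeta_o)^{\delta-2}]-(\delta+1)\E[\zeta_o(1-\zeta_o)^{\delta-1}]$ as claimed. Aside from this sign, the argument is correct.
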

 \begin{proof}
Please refer to Appendix \ref{app:AoI_M12_Q2} for the proof.
 \end{proof}
The following corollary presents  an approximation of distribution of $\bar{A}_{\rm P}(\beta;\Phi)$.
 \begin{cor}
 \label{cor:Distribution_AoI_II}
Under the beta approximation, the $\cdf$ of  the conditional mean PAoI for { preemptive queueing discipline} queue is
\begin{align*}
{F}(x;\beta)&= {B_{g_{\mu_\Phi}(x)}}(\kappa_1,\kappa_2)
\numberthis\label{eq:AoIDistribition_2}
\end{align*} 
where $g_{\mu_\Phi}(x)=\left\{\mu_\Phi\in[0,1]:\bar{A}_2(\beta;\Phi)=x\right\},$
and $\kappa_1$ and $\kappa_2$ are obtain using Lemma \ref{lemma:moments_SuccessProb} by setting $\bar{p}_m=\bar{p}_m^{\rm D}$.
\end{cor}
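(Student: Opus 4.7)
My plan is to reduce the claim to Lemma~\ref{lemma:moments_SuccessProb} by inverting the expression for $\bar{A}_{\rm P}(\beta;\Phi)$ in $\mu_\Phi$ and then transferring the probability over to $\mu_\Phi$. From Theorem~\ref{thm:Peak_AoI_queue2_mgf},
\begin{equation*}
\bar{A}_{\rm P}(\beta;\Phi) = \mathcal{Z}_{\rm a} + \frac{1}{\xi\mu_\Phi} + \frac{1}{\lambda_{\rm a} + (1-\lambda_{\rm a})\xi\mu_\Phi},
\end{equation*}
and both $\mu_\Phi$-dependent summands are strictly decreasing on $(0,1]$, so the right-hand side is itself strictly decreasing in $\mu_\Phi$. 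Consequently, for each admissible $x$ the level set $g_{\mu_\Phi}(x)=\{u\in[0,1]:\bar{A}_{\rm P}(\beta;\Phi)|_{\mu_\Phi=u}=x\}$ is a single point, given explicitly by the (admissible) root of the quadratic in $u$ obtained after clearing denominators. I would leave this closed form implicit, matching the statement of the corollary.

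Next, by monotonicity the event $\{\bar{A}_{\rm P}(\beta;\Phi)\le x\}$ is equivalent to $\{\mu_\Phi\ge g_{\mu_\Phi}(x)\}$, so the CDF of $\bar{A}_{\rm P}(\beta;\Phi)$ is nothing but a tail of $\mu_\Phi$ evaluated at $g_{\mu_\Phi}(x)$. I would then invoke the beta approximation in Lemma~\ref{lemma:moments_SuccessProb} for the distribution of $\mu_\Phi$, with shape parameters fitted via \eqref{eq:Beta_parameters} by matching $M_1$ and $M_2$, and read off the right-hand side of \eqref{eq:AoIDistribition_2} in terms of the regularized incomplete beta function.

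Finally, I would justify the particular choice $\bar{p}_m=\bar{p}_m^{\rm D}$ for computing $\kappa_1$ and $\kappa_2$: because the preemptive queue has the same two-state diagram (Fig.~\ref{Fig:StateDiagram}) and hence the same conditional activity $\zeta_o$ as the non-preemptive one, as argued in Section~\ref{subsec:Activity_Dist_mu}, the second-degree modified system of Section~\ref{subsec:TwoStepAnalysis} and its activity moments \eqref{eq:Activity_Moments_DomSys} carry over verbatim. Hence the same $\bar{p}_m^{\rm D}$ already computed for Corollary~\ref{cor:Distribution_AoI_I} can be reused here without recomputation. The main subtlety—more than a genuine obstacle—is this invariance-of-activity argument, together with verifying monotonicity of $\bar{A}_{\rm P}(\beta;\Phi)$ in $\mu_\Phi$ so that the inversion defining $g_{\mu_\Phi}(x)$ is single-valued; once these are in place, the remainder is algebraic inversion plus a direct application of Lemma~\ref{lemma:moments_SuccessProb}.
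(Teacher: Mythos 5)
Your approach is essentially the paper's: invert $\bar{A}_{\rm P}(\beta;\Phi)$ as a function of $\mu_\Phi$ via the level-set map $g_{\mu_\Phi}$, transfer the probability onto $\mu_\Phi$, apply the beta approximation of Lemma~\ref{lemma:moments_SuccessProb}, and justify using $\bar{p}_m=\bar{p}_m^{\rm D}$ by the observation in Section~\ref{subsec:Activity_Dist_mu} that the two-state diagram, hence the conditional activity, is identical to the non-preemptive case. On that last point and on the well-definedness of $g_{\mu_\Phi}$ you are on solid ground.

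There is, however, a sign problem that you do not resolve. Your monotonicity argument is correct and in fact more careful than the paper's: since both $1/(\xi\mu_\Phi)$ and $1/q_{\rm s}$ with $q_{\rm s}=\lambda_{\rm a}+(1-\lambda_{\rm a})\xi\mu_\Phi$ are strictly decreasing in $\mu_\Phi$, the event $\{\bar{A}_{\rm P}(\beta;\Phi)\le x\}$ coincides with $\{\mu_\Phi\ge g_{\mu_\Phi}(x)\}$. This gives
$F(x;\beta)=\P[\mu_\Phi\ge g_{\mu_\Phi}(x)]={\rm D}\bigl(g_{\mu_\Phi}(x)\bigr)\approx 1-B_{g_{\mu_\Phi}(x)}(\kappa_1,\kappa_2)$,
exactly paralleling Corollary~\ref{cor:Distribution_AoI_I}. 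But the statement you were asked to prove reads $F(x;\beta)=B_{g_{\mu_\Phi}(x)}(\kappa_1,\kappa_2)$ with no complement, and the paper's own proof writes $\P[\bar{A}_{\rm P}\le x]=\P[\mu_\Phi\le g_{\mu_\Phi}(x)]$ with the inequality in the wrong direction; both appear to carry a typo that your reasoning actually corrects. Your claim that you can ``read off the right-hand side of \eqref{eq:AoIDistribition_2}'' therefore does not follow from the preceding sentences: what you read off is the complement. You should flag this discrepancy explicitly (and state that the formula ought to match the form in Corollary~\ref{cor:Distribution_AoI_I}) rather than silently assert agreement with the quoted right-hand side.
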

\begin{proof}
Let $g_{\mu_\Phi}=\bar{A}_2(\beta;\Phi)$ is a function of $\mu_\Phi$. Therefore,   the $\cdf$ of $\bar{A}_2(\beta;\Phi)$ becomes
\begin{align*}
\P[\bar{A}_{\rm P}(\beta;\Phi)\leq x]=\P[\mu_\Phi\leq g_{\mu_\Phi}(x)]{=1-{\rm D}(g_{\mu_\Phi}(x))},
\end{align*}  
where $g_{\mu_\Phi}(x)=\left\{\mu_\Phi\in[0,1]:\bar{A}_2(\beta;\Phi)=x\right\}$.
Next, using the beta approximation for the distribution of $\mu_\Phi$ given in Lemma \ref{lemma:moments_SuccessProb}, we obtain \eqref{eq:AoIDistribition_2}. 
 The parameters $\kappa_1$ and $\kappa_2$ of the beta approximation are obtained using  the two-step analysis by setting  $\bar{p}_m=\bar{p}_m^{\rm D}$ in \eqref{eq:Beta_parameters}.
\end{proof}
\section{Numerical Results and Discussion}
\label{sec:Results}
This paper presents a new approach of the two-step system level modification for enabling the  success probability analysis when the queues at the SD pairs are correlated. Therefore, before presenting the numerical analysis of the PAoI, we verify the application of  this new two-step analytical framework for characterizing the AoI using simulation results in the following subsection. Throughout this section, we consider the system parameter as  $\lambda_a=0.3$ updates/slot, $\lambda_\text{sd}=10^{-3}$ links/m$^2$, $\xi=0.5$, $R=15$ m,  $\beta=3$ dB, and $\alpha=4$ unless mentioned otherwise. { In the figures, $Q_1$ and $Q_2$ indicate the curves correspond to the  non-preemptive and preemptive queiueing disciplines, respectively.}
\begin{figure*}
\centering
\hspace{-5mm}\includegraphics[trim=0cm 0cm 0cm 0cm, width=.36\textwidth]{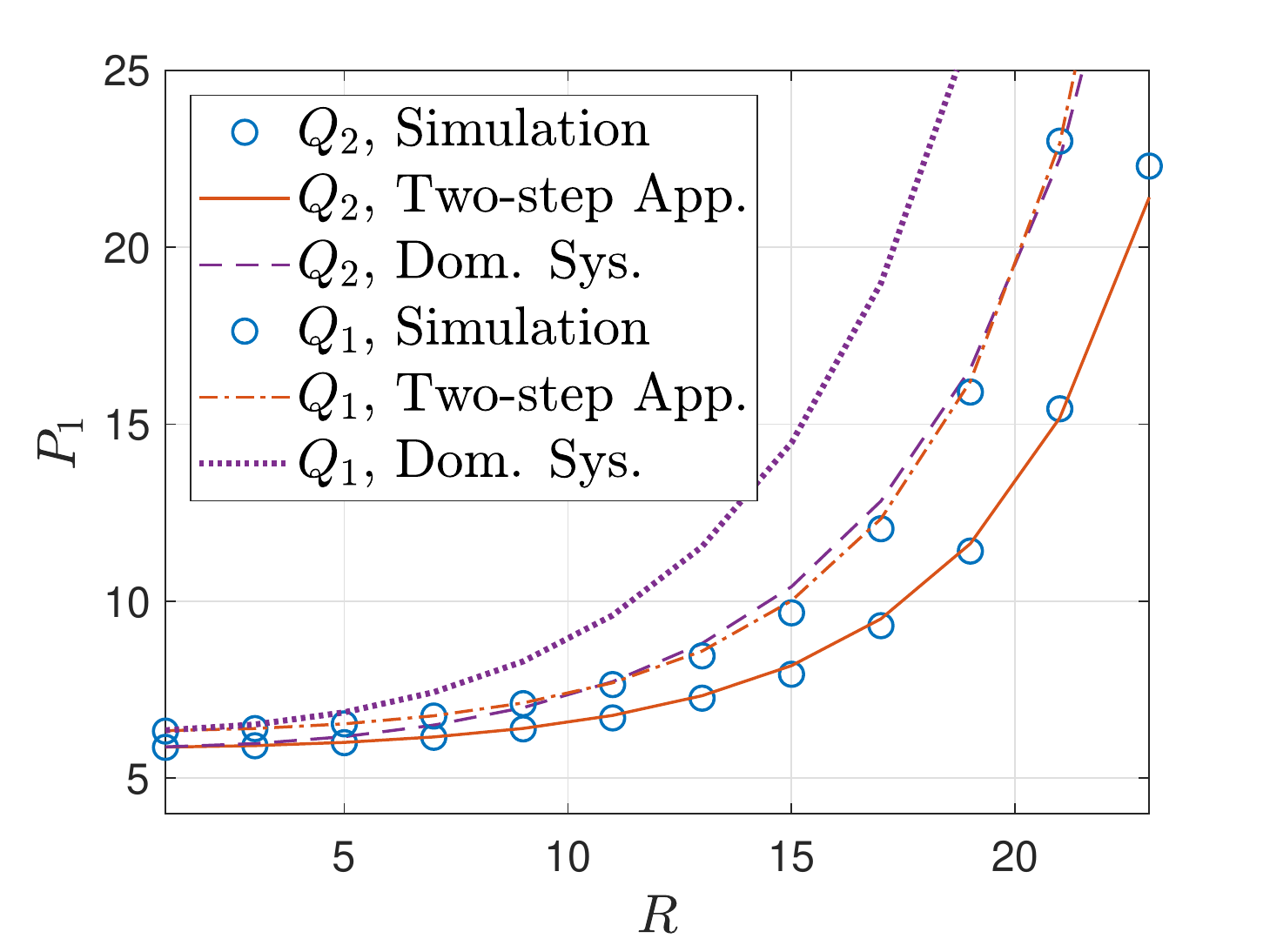}
\hspace{-5mm}\includegraphics[trim=0cm 0cm 0cm 0cm, width=.36\textwidth]{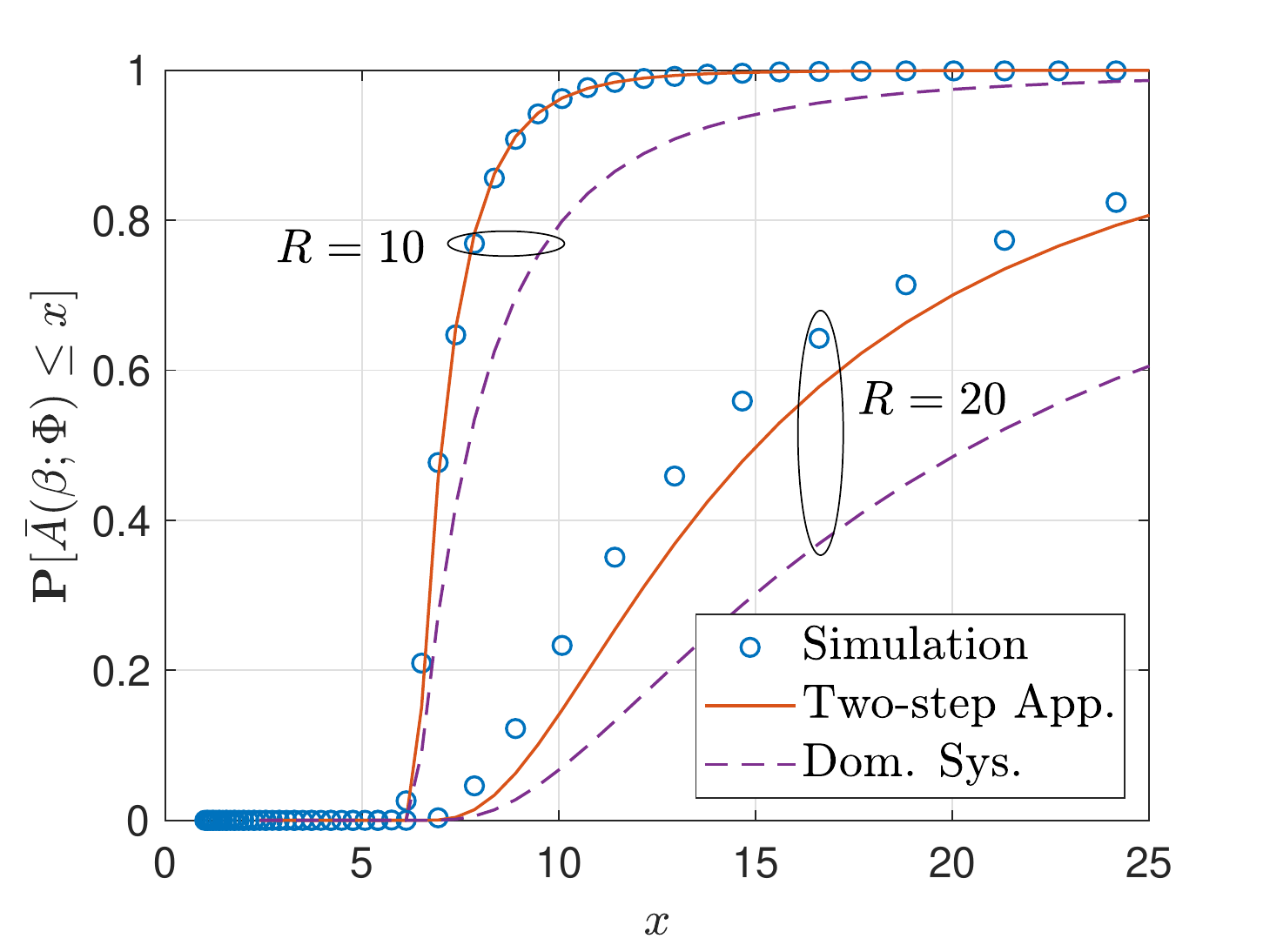}
\hspace{-7mm}\includegraphics[trim=0cm 0cm 0cm 0cm, width=.36\textwidth]{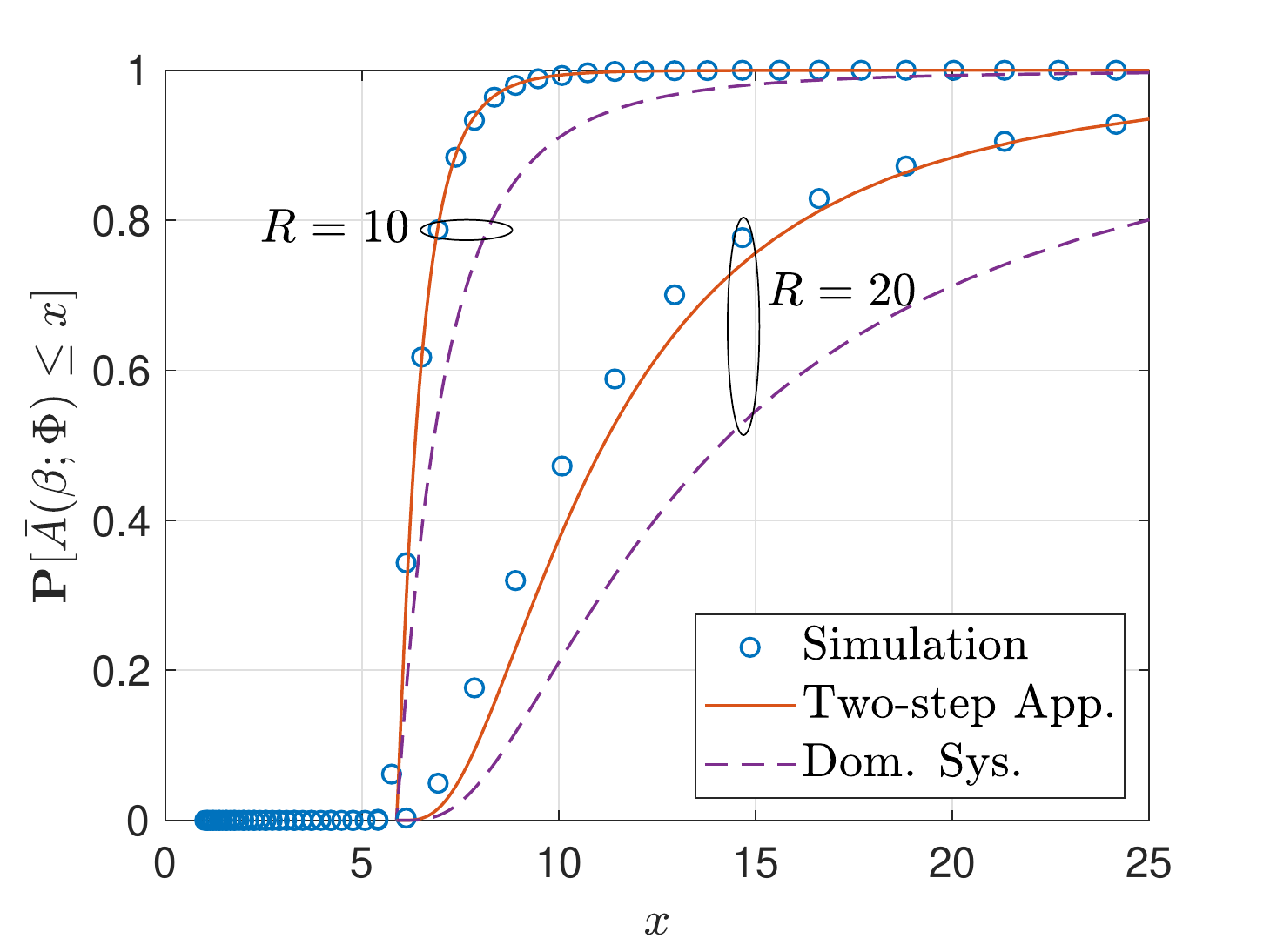}
 \caption{Verification of the proposed analytical framework:  Average of the conditional mean PAoI (Left) and the spatial distribution of the conditional mean PAoI  for non-preemptive (Middle) and preemptive (Right) disciplines.}
 \label{Fig:CDF_Mean_Peak_AoI}
\end{figure*}

 {Fig. \ref{Fig:CDF_Mean_Peak_AoI} verifies the proposed two-step approach based analytical framework using the simulation results for different values of link distances $R$.  Note that $R\in[10,20]$ is a wide enough range when $\lambda_{\rm sd}=10^{-3}$ links/m$^{2}$ (for which the dominant interfering source lies  at an average distance of around 15 m). Fig.  \ref{Fig:CDF_Mean_Peak_AoI} (Left) shows the spatial average of the conditional mean PAoI for both queuing disciplines, whereas Fig. \ref{Fig:CDF_Mean_Peak_AoI} (Middle) and (Right) show the spatial distribution of the conditional mean PAoI for the non-preemptive (obtained using Cor. \ref{cor:Distribution_AoI_I}) and preemptive  (obtained using Cor. \ref{cor:Distribution_AoI_II}) disciplines, respectively. From these figures it is quite apparent that the proposed two-step approach provide significantly tighter bounds as compared to the conventional dominant system based approaches. The spatial distribution (for larger $R$)  is somewhat loose as compared to the spatial average. This is because an additional approximation (beta approximation) is used to obtain the distribution of $\bar{A}(\beta;\Phi)$.    }

The performance trends of the  conditional mean PAoI $\bar{A}(\beta;\Phi)$ with respect to  the $\sir$ threshold $\beta$ and the path-loss exponent $\alpha$  are presented in Fig. \ref{Fig:Mean_SD_AoI_vs_Beta}. 
\begin{figure}[h]
\centering
 \includegraphics[trim=0cm 0cm 0cm 0cm, width=.5\textwidth]{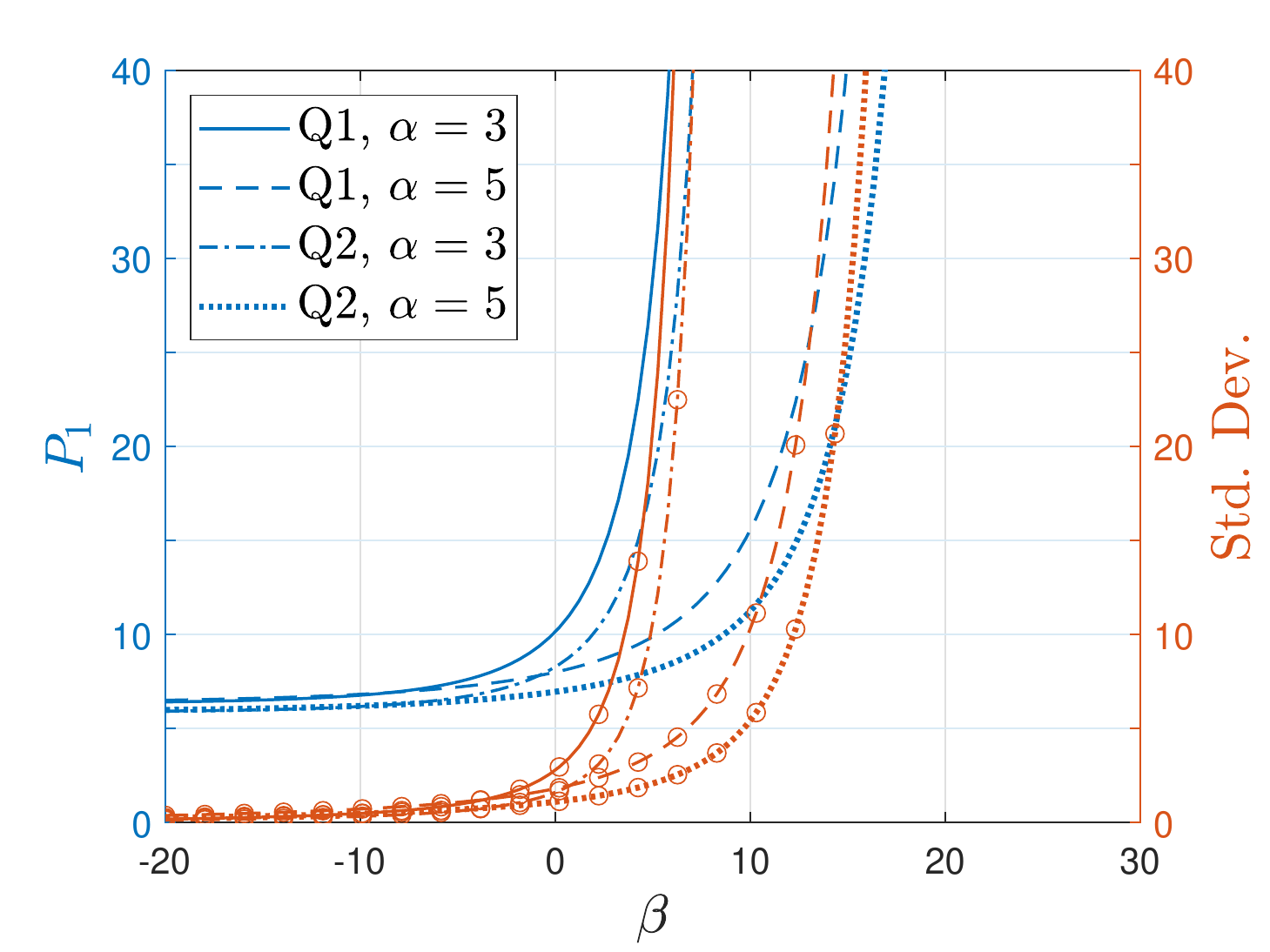}
 \caption{Mean and SD of $\bar{A}(\beta;\Phi)$ with respect to $\sir$ threshold $\beta$ for $R=10$. {The plane curves correspond to the mean, whereas the circular marked curves correspond to Std. Dev.}    }
 \label{Fig:Mean_SD_AoI_vs_Beta}
\end{figure}
It may be noted that the success probability decreases with the increase in $\beta$ and decrease in $\alpha$. As a result, we can observe from Fig. \ref{Fig:Mean_SD_AoI_vs_Beta} that the mean of $\bar{A}(\beta;\Phi)$,  i.e., $P_1$, also degrades with respect to these parameters in the same order (under both types of queues). As expected, it is also observed that $P_1$ increases sharply around the value of $\beta$ where the success probability {approaches  zero}.  On the other hand,  $P_1$ converges to a constant value as  $\beta$ {approaches zero} where the success probability is almost one. In this region, $P_1$ only depends on the packet arrival rate $\lambda_{\rm a}$ and medium access probability $\xi$. For $\lambda_{\rm a}=0.3$ and $\xi=0.5$,  we can obtain $P_1\approx 6.33$ for non-preemptive queue and $P_1\approx 5.56$ for preemptive queue  by plugging $\mu_\Phi=1$ in   \eqref{eq:Mean_Peak_Age_Given_Phi} and   \eqref{eq:Conditional_Mean_Peak_AoI_2}, respectively. These values of $P_1$ can be verified from Fig. \ref{Fig:Mean_SD_AoI_vs_Beta} when $\beta \leq 0$.

Further, it can be observed from Fig. \ref{Fig:Mean_SD_AoI_vs_Beta} that the standard deviations of $\bar{A}(\beta;\Phi)$ follow a similar trend as that of $P_1$ except at $\beta\to 0$. This implies that the second moment of  $\bar{A}(\beta;\Phi)$, i.e., $P_2$, increases at a faster rate than the one of $P_1^2$ with respect to $\beta$ and $\alpha$.   
In fact, this trend of $P_1$ and $P_2$ also holds for the other systems parameters which we discuss next. 

\begin{figure}[h]
\centering
 \includegraphics[trim=0cm 0cm 0cm 0cm, width=.5\textwidth]{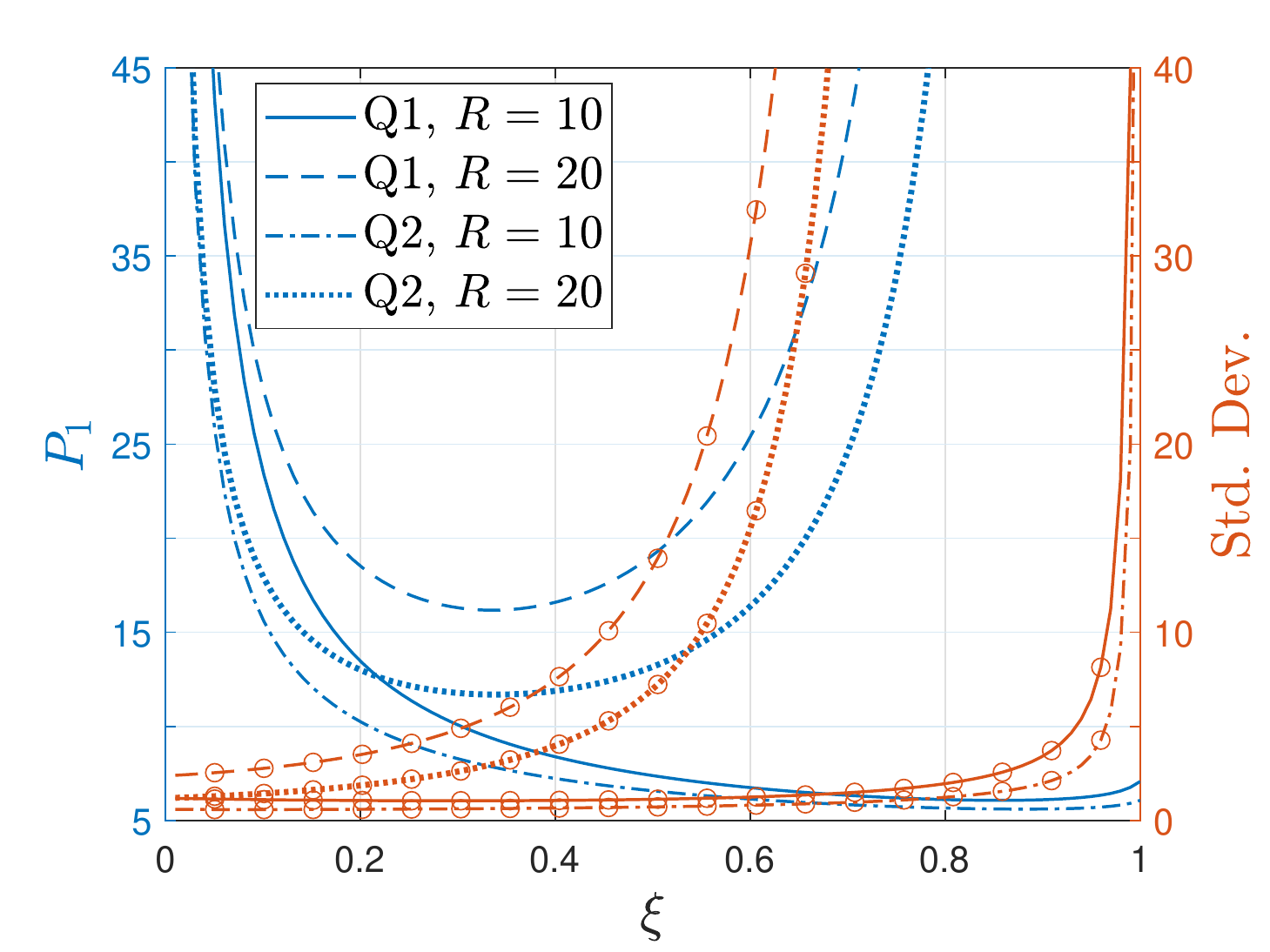}
 \hspace{-3mm} \includegraphics[trim=0cm 0cm 0cm 0cm, width=.5\textwidth]{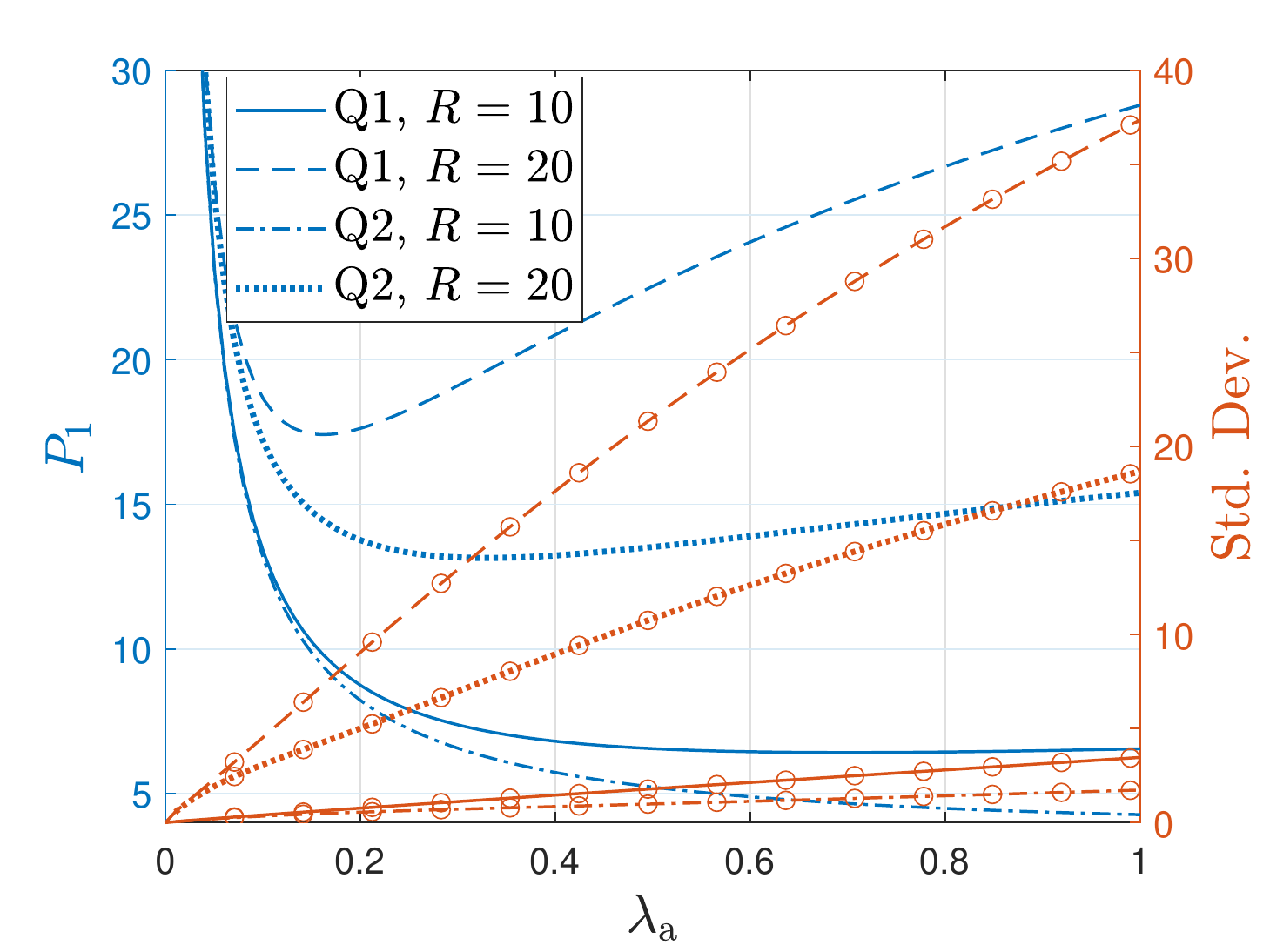}
 \caption{Mean and SD of $\bar{A}(\beta;\Phi)$ with respect to medium access probability $\xi$. {The plane curves correspond to the mean, whereas the circular marked curves correspond to Std. Dev.}   }
 \label{Fig:Mean_SD_AoI_vs_AccessProb_ArrivalRate}
\end{figure}
Now, we present the performance trends of  the mean and standard deviation of the conditional mean PAoI  with respect to the medium access probability $\xi$ and the status update arrival rate $\lambda_{\rm a}$ in Fig. \ref{Fig:Mean_SD_AoI_vs_AccessProb_ArrivalRate} (left) and Fig. \ref{Fig:Mean_SD_AoI_vs_AccessProb_ArrivalRate} (right), respectively. From these results, it can be seen that $P_1$ approaches to infinity as $\lambda_{\rm a}$ and/or $\xi$ drop to zero.   This is because the expected inter-arrival times  between updates approach infinity as $\lambda_{\rm a}\to 0$ and the expected delivery time approaches infinity as $\xi\to 0$.  
On the contrary, increasing $\lambda_{\rm a}$ and $\xi$  reduces the inter-arrival and delivery times of the status updates which  reduces $P_1$. However, $P_1$ again increases with further increase in $\lambda_{\rm a}$ and $\xi$. This is because the activities of the interfering SD pairs increase significantly at higher values of  $\lambda_{\rm a}$ and $\xi$, which causes severe interference and hence increases $P_1$. However, its rate of increase depends upon the success probability, which further depends upon the wireless link parameters such as the link distance $R$,  $\sir$ threshold $\beta$, path-loss exponent $\alpha$. For example, the figure shows that  $P_1$ increases at a faster rate with $\lambda_{\rm a}$ and $\xi$   when  $R$ is higher.  

{ Fig. \ref{Fig:Mean_AoI_vs_lam_sd} (Left) shows an interesting  fact that the same level of mean AoI performance (below a certain threshold) can be achieved across the different densities of SD pairs $\lambda_{\rm sd}$ just by controlling the medium access probability $\xi$.
The figure shows that    higher $\xi$ is a preferable choice for the lightly dense network (i.e., smaller $\lambda_{\rm sd}$). This is because of the tolerable interference in slightly dense scenario allows to choose higher $\xi$ without affecting success probability which in turn provides to the better service rate. On the hand, smaller $\xi$ is a preferable choice for highly  dense network (i.e., larger $\lambda_{\rm sd}$). This is because of the smaller $\xi$ can help to manage the severe interference in  dense scenario which in turn gives the better service rate.
\setcounter{figure}{9}
    \begin{figure}[h]
\centering
 \hspace{-5mm}\includegraphics[trim=0cm 0cm 0cm 0cm, width=.5\textwidth]{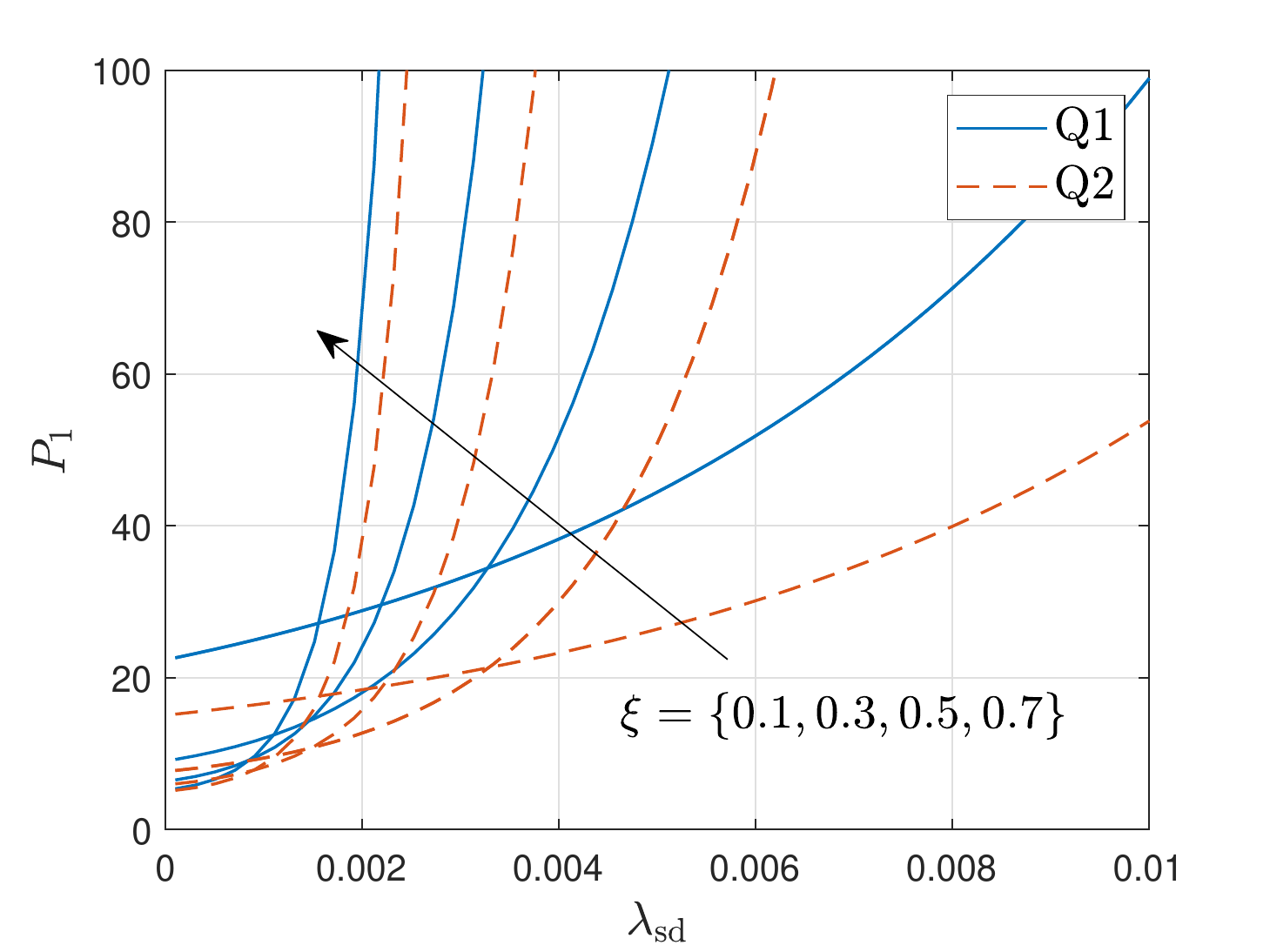}
 \hspace{-1mm} \includegraphics[trim=0cm 0cm 0cm 0cm, width=.5\textwidth]{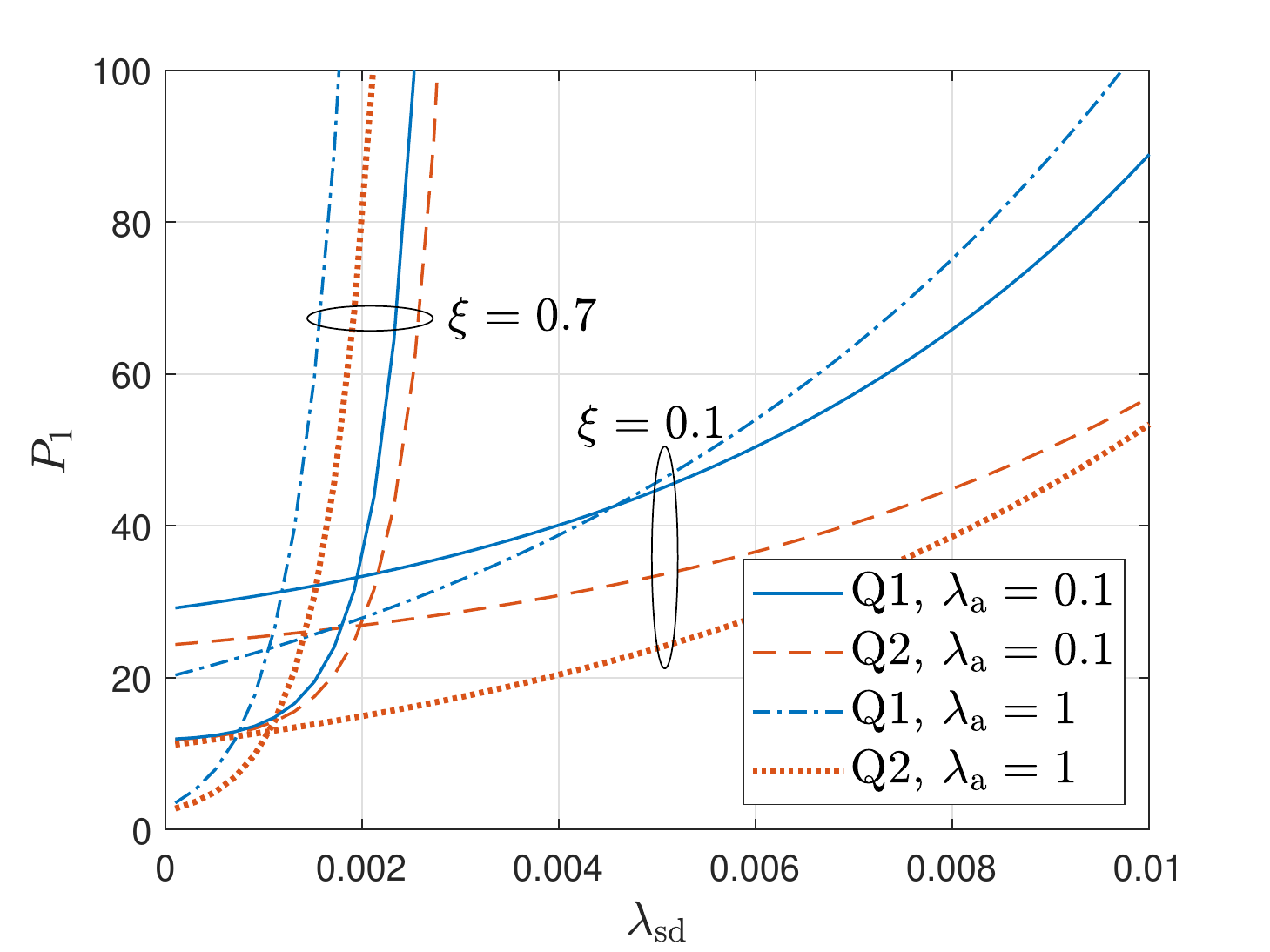}
 \caption{Mean of $\bar{A}(\beta;\Phi)$ with respect to the density of SD pairs $\lambda_{\rm a}$.}
 \label{Fig:Mean_AoI_vs_lam_sd}
\end{figure}
In addition,  Fig. \ref{Fig:Mean_AoI_vs_lam_sd} (Right) shows that $\xi$ allows to manage the mean AoI performance more efficiently as compare to the update  arrival rate $\lambda_{\rm a}$ for different network densities. Furthermore,  from the curves corresponding to $Q_1$ for $\xi=0.1$ in Fig. \ref{Fig:Mean_AoI_vs_lam_sd} (Right), it is interesting note that lowering $\lambda_{\rm a}$ with the increase of $\lambda_{\rm sd}$ allows to achieve better mean AoI performance. This follows due to the smaller $\lambda_{\rm a}$ aids to reduce the activities of sources which is beneficial to alleviate interference and thus provide better service rate in the dense scenario. }       

\section{Conclusion}
This paper considered a large-scale wireless network consisting of SD pairs whose locations follow a bipolar PPP. The source nodes are supposed to keep the information status at their corresponding destination nodes fresh by sending status updates over time. The AoI metric was used to measure the freshness of information at the destination nodes. For this system setup, we developed a novel stochastic geometry-based approach that allowed us to derive a tight upper bound on the spatial moments of the conditional mean PAoI {with no storage facility under preemptive and non-preemptive queuing disciplines.  The non-preemptive queue  drops the newly arriving updates  until the update in service is successfully delivered, whereas the preemptive queue discards the older update in service (or, retransmission) upon the arrival of a new update.}

Our analysis provides several useful design insights. For instance, the analytical results demonstrate the impact of the update arrival rate{, medium access probability and} wireless link parameters on the spatial mean and variance of the conditional mean PAoI.  The key observation was that preemptive queue reduces the mean PAoI almost by a factor two compared to the non-preemptive queue in the asymptotic regime of the update arrival rate when  the success probability is relatively smaller. 
{  Our numerical results also reveal that the medium access probability plays an important role as compared to the update arrival rate for ensuring better mean AoI performance under different network densities.}

{ As a promising avenue of future work, one can  analyze AoI for a system wherein the destinations are collecting status updates from multiple sources in their vicinity regarding a common physical random process.}
\appendix
\subsection{Proof of Lemma \ref{lemma:moments_SuccessProb}}
\label{app:moments_SuccessProb}
The $b$-th  moment of the conditional successful probability $\mu_\Phi$ given in \eqref{eq:conditional_SuccessProb} is
\begin{align*}
M_b&=\E_{\Phi,p_\x}\left[\prod_{\x\in\Phi}\left(1-\frac{p_\x}{1+\beta^{-1} R^{-\alpha}\|\x\|^{\alpha}}\right)^b\right]=\E_{\Phi}\left[\prod_{\x\in\Phi}\E_{p_\x}\left(1-\frac{p_\x}{1+\beta^{-1} R^{-\alpha}\|\x\|^{\alpha}}\right)^b\right],
\end{align*} 
where the last equality follows from the  assumption of independent activity $p_\x$ for $\forall \x\in\Phi$. Now, using the binomial expansion $(1-x)^b=\sum_{m=0}^\infty (-1)^m{b \choose m} x^m$ for a general  $b\in\Z$, we can write
\begin{align*}
M_b&=\E_{\Phi}\left[\prod_{\x\in\Phi}\sum_{m=0}^b(-1)^m{b \choose m}\frac{\bar{p}_m}{(1+\beta^{-1} R^{-\alpha}\|\x\|^{\alpha})^m}\right],
\end{align*} 
where $\bar{p}_m=\E[p_\x^m]$ is the $m$-th moment of the activity. Further, applying the probability generating functional ($\pgfl$) of homogeneous PPP $\Phi$, we obtain 
\begin{align*}
M_b&=\exp\left(-2\pi\lambda_\text{sd}\int_0^\infty\left[1-\sum_{m=0}^\infty (-1)^m{b \choose m}\frac{\bar{p}_m}{\left({1+\beta^{-1} R^{-\alpha}r^{\alpha}}\right)^m}\right]r{\rm d}r\right),\\
&=\exp\left(-2\pi\lambda_\text{sd}\sum_{m=1}^\infty (-1)^{m+1}{b \choose m}\bar{p}_m\int_0^\infty\frac{1}{(1+\beta^{-1} R^{-\alpha}r^{\alpha})^m} r{\rm d}r\right).\numberthis\label{eq:Mb_Appendix}
\end{align*}
Using \cite[Appendix A]{Martin2016Meta}, the inner integral can be evaluated as
\begin{align*}
\int_0^\infty\frac{(\beta R^{\alpha})^m}{\left(\beta R^{\alpha}+r^{\alpha}\right)^m} r{\rm d}r&=\frac{1}{\alpha}(\beta R^\alpha)^m\int_0^\infty\frac{t^{1-\delta}}{\left(\beta R^\alpha+t\right)^m} {\rm d}t=(-1)^{m+1}\frac{\beta^\delta R^2}{\alpha}{\delta -1 \choose m-1}\frac{\pi}{\sin(\delta\pi)}.
\end{align*}
Finally, substituting the above solution in \eqref{eq:Mb_Appendix}, we get \eqref{eq:Moments_mu_Phi}.
\subsection{Proof of Corollary \ref{cor:AoI_M12_Q1}}
\label{app:AoI_M12_Q1}
Solving \eqref{eq:Moment_Peak_AoI_a} for $b=\{1,2\}$ and then substituting $M_{-1}$ and $M_{-2}$ from Lemma \ref{lemma:moments_SuccessProb}, we obtain \eqref{eq:Mean_Peak_AoI}-\eqref{eq:Variance_Peak_AoI}. 
From the definition of $C_{\zeta_o}(b)$, we can directly determine 
 $$C_{\zeta_o}(-1)=\E\bigg[\sum_{m=1}^\infty {-1 \choose m}{\delta - 1 \choose m-1}\zeta_o^m\bigg]=-\E\left[\zeta_o(1-\zeta_o)^{\delta-1}\right].$$
 Now, for $b=-2$, let $C_{\zeta_o}(-2)=\E[B(\zeta_o,-2)]$ where 
 \begin{align*}
B(\zeta_o,-2)&=\sum_{m=1}^\infty {-2 \choose m}{\delta - 1 \choose m-1}\zeta_o^m,\\
&=\sum_{m=1}^\infty (-1)^{2m-1}(m+1){m - 1-\delta \choose m-1}\zeta_o^m,\\
&=-\sum_{m=1}^\infty (m-\delta)\frac{\Gamma(m-\delta)}{\Gamma(1-\delta)\Gamma(m)}\zeta_o^m + (\delta+1){m-1-\delta  \choose m-1}\zeta_o^m,\\
&=-\sum_{m=1}^\infty \frac{\Gamma(2-\delta)}{\Gamma(2-\delta)}\frac{\Gamma(m-\delta+1)}{\Gamma(1-\delta)\Gamma(m)}\zeta_o^m + (\delta+1){m-1-\delta  \choose m-1}\zeta_o^m, \\
&=-\sum_{m=1}^\infty \frac{\Gamma(2-\delta)}{\Gamma(1-\delta)}{m -\delta \choose m-1}\zeta_o^m + (\delta+1){m-1-\delta  \choose m-1}\zeta_o^m, \\
&=-(1-\delta)\sum_{m=1}^\infty {m -\delta \choose m-1}\zeta_o^m - (\delta+1)\sum_{m=1}^\infty{m-1-\delta  \choose m-1}\zeta_o^m, \\
&=(\delta-1)\zeta_o\sum_{l=0}^\infty {l+1 -\delta \choose l}\zeta_o^l - (\delta+1)\zeta_o\sum_{l=0}^\infty{l-\delta  \choose l}\zeta_o^l \\
&=(\delta-1)\zeta_o(1-\zeta_o)^{\delta-2} - (\delta+1)\zeta_o(1-\zeta_o)^{\delta-1}. 
 \end{align*}
 Finally, taking expectation of $B(\zeta_o,-2)$ with respect to $\zeta_o$ provides $C_{\zeta_o}(-2)$.
\subsection{Proof of Lemma \ref{lemma:Tk_pmf}}
\label{app:Tk_pmf}
Since the arrival process is Bernoulli, inter-arrival times $\tau_n$s between updates are i.i.d. and also follow geometric distribution with parameter $\lambda_\text{a}$.  From the fact that the sum of independent geometric random variables follows negative binomial distribution, we know 
\begin{equation*}
\P[X_n=K]= {K-1 \choose n-1}  \lambda_\text{a}^n(1-\lambda_\text{a})^{K-n}.
\end{equation*}
 Now, we derive the distribution of $\hat{T}_k$ using the above $\pmf$s of $T_k$ and $X_{N_{k}}$. We start by first deriving  the probabilities for the boundary values of $\hat{T}_k$ for given $T_k$.  Since at least one slot is required for the successful update delivery, we have  $1\leq \hat{T}_k\leq T_k$.
From Fig. \ref{Fig:Illustration_Tk}, it can be observed that { $\hat{T}_k=1$ either if $T_k=1$ or if there is a new arrival in the $(t_k+T_k)$-th time slot}. Therefore, we can write
 \begin{align*}
 \P[\hat{T}_k=1]&=\P[T_k=1] + \P[{\text{New Arrival Occurs in the}~(t_k+T_k)\text{-th slot}}]\P[T_k>1]\\
 &=\xi\mu_\Phi + \lambda_\text{a}(1-\xi\mu_\Phi).\numberthis\label{eq:pmf_Tk_Sk_1}
 \end{align*} 
 It is worth noting that $ \P[\hat{T}_k=1]$ is independent of $T_k$.
The other boundary case, i.e.,  $\hat{T}_k=T_k$, is possible only if there is no new update arrival in $[t_k, t_k^\prime)$. Thus,  we have 
 \begin{equation}
\P[\hat{T}_k=T_k]=(1-\lambda_\text{a})^{T_k-1}.
\label{eq:pmf_Tk_Sk_1a}
 \end{equation} 
 Now, we determine the probability of $\hat{T}_k=m$ for $m\in\{2,\dots,T_k-1\}$.
 For $\hat{T}_k=m$, the following  two conditions must hold so that we get $X_{N_k}=T_k-m$, 
 \begin{itemize}
 \item the number of new arrivals within $T_k$ are $N_k\in\{1,\dots,T_k-m\}$, and
  \item there  should not be a new arrival in $[t_k^\prime-m, t_k^\prime]$. 
\end{itemize}
 Therefore, for a given $T_k=s$,  we have
 \begin{align*}
  \P[\hat{T}_k=m|T_k=s]&=\P[\text{no arrival in ~}[t_k^\prime-m, t_k^\prime)]\P[N_k=n \text{~such that~} X_{n}=s-m],\\
&=(1-\lambda_\text{a})^m\sum_{n=1}^{s+1-m}\P[X_{n}=s-m],\\
 &=(1-\lambda_\text{a})^m\sum_{n=1}^{s+1-m}{s-m \choose n-1}  \lambda_\text{a}^n(1-\lambda_\text{a})^{s-m-n}.\numberthis \label{eq:pmf_Tk_Sk_2}
 \end{align*}
 Thus, for $T_k=s>1$, from \eqref{eq:pmf_Tk_Sk_1}-\eqref{eq:pmf_Tk_Sk_2},  we get
 \begin{equation}
 \P[\hat{T}_k=m|T_k=s]=\begin{cases}\lambda_\text{a},&\text{~for~}m=1,\\
\sum_{n=1}^{s+1-m}{s-m \choose n-1}  \lambda_\text{a}^n(1-\lambda_\text{a})^{s-n},&\text{~for~}m=2,\dots,s-1,\\
(1-\lambda_\text{a})^{m-1},&\text{~for~}m=s.
\end{cases}
\label{eq:pmf_Tk_Sk}
 \end{equation}
Finally, we can determine the $\pmf$ of $\hat{T}_k$ for $m>1$  as follows
 \begin{align*}
& \P[\hat{T}_k=m]= \sum_{s=m}^\infty \P[\hat{T}_k=m|T_k=s]\P[T_k=s],\\
 &=\P[\hat{T}_k=m|T_k=m]\P[T_k=m]+\sum_{s=m+1}^\infty \P[\hat{T}_k=m|T_k=s]\P[T_k=s],\\
 &=(1-\lambda_\text{a})^{m-1}\xi\mu_\Phi(1-\xi\mu_\Phi)^{m-1}+\underbrace{\sum_{s=m+1}^\infty\xi\mu_\Phi(1-\xi\mu_\Phi)^{s-1}\sum_{n=1}^{s+1-m}{s-m \choose n-1}  \lambda_\text{a}^n(1-\lambda_\text{a})^{s-n}}_{\ncalB}.
 \end{align*}
 where the last equality follows using \eqref{eq:Tk_pmf} and \eqref{eq:pmf_Tk_Sk}.
Substituting $s-m=l$ in the above term $\ncalB$, we get
\begin{align*}
\ncalB&=\sum_{l=1}^\infty\xi\mu_\Phi(1-\xi\mu_\Phi)^{l+m-1}\sum_{n=1}^{l+1}{l \choose n-1}  \lambda_\text{a}^n(1-\lambda_\text{a})^{l+m-n},\\
&=\xi\mu_\Phi(1-\xi\mu_\Phi)^{m} (1-\lambda_\text{a})^{m}
 \sum_{z=2}^\infty(1-\xi\mu_\Phi)^{z-2}\sum_{n=1}^{z}{z-1 \choose n-1}  \lambda_\text{a}^n(1-\lambda_\text{a})^{z-1-n}. \end{align*}
 Next, by substituting $$\sum_{n=1}^{l}{l-1 \choose n-1}  \lambda_\text{a}^n(1-\lambda_\text{a})^{l-n}=\lambda_\text{a},$$ 
and $z-2=a$,  we obtain
  \begin{align*}
\ncalB&=\lambda_\text{a}\xi\mu_\Phi(1-\xi\mu_\Phi)^{m} (1-\lambda_\text{a})^{m-1}
 \sum_{a=0}^\infty(1-\xi\mu_\Phi)^{a}
 =\lambda_\text{a}(1-\xi\mu_\Phi)^{m} (1-\lambda_\text{a})^{m-1},
\end{align*}
 where the last equality follows using the power series $\sum_{n=0}^\infty x^n=\frac{1}{1-x}$ for $|x|<1$. Finally,  substituting $\ncalB$ in $\P[\hat{T}_k=m]$,  we get
 \begin{align*}
 \P[\hat{T}_k=m]&=\xi \mu_\Phi (1-\xi\mu_\Phi)^{m-1}(1-\lambda_\text{a})^{m-1}+ \lambda_\text{a}(1-\xi\mu_\Phi)^{m}(1-\lambda_\text{a})^{m-1},\\
 &=\left(\xi \mu_\Phi + \lambda_\text{a}(1-\xi\mu_\Phi)\right)(1-\xi\mu_\Phi)^{m-1}(1-\lambda_\text{a})^{m-1},\numberthis \label{eq:pmf_Tk_2}
 \end{align*}
for $m> 1$.
 Therefore, using \eqref{eq:pmf_Tk_Sk_1} and \eqref{eq:pmf_Tk_2}, we obtain \eqref{eq:pmf_Tk}.
 \subsection{Poof of Theorem \ref{thm:AoI_Mb_Q2}}
 \label{app:AoI_Mb_Q2}
 Let ${\cal S}_{\rm a}=\xi\mu_\Phi (1-\lambda_{\rm a})+\lambda_{\rm a}$ be the denominator of the last term in \eqref{eq:Conditional_Mean_Peak_AoI_2}. 
Since $\lambda_{\rm a}\in (0,1)$, ${\cal S}_{\rm a}$ represents the convex combination of $1$ and $\xi\mu_\Phi$. As $\xi\mu_\Phi\in(0,1)$,  we have $0< {\cal S}_{\rm a}<1$. Therefore, using the following binomial expansion  $$(1+x)^{-n}=\sum_{k=0}^\infty (-1)^k{n+k-1\choose k} x^k \text{~for~} |x|<1, $$ we can write 
\begin{align*}
{\cal S}_{\rm a}^{-n}=\left(1+({\cal S}_{\rm a}-1)\right)^{-n}=\sum_{k=0}^\infty (-1)^k{n+k-1 \choose k} ({\cal S}_{\rm a}-1)^k.
\end{align*}
Note that
$({\cal S}_{\rm a}-1)^k=\left[\xi\mu_\Phi(1-\lambda_{\rm a}) - (1-\lambda_{\rm a})\right]^k=(-1)^k(1-\lambda_{\rm a})^k(1-\xi\mu_\Phi)^k.$ 
Using this, we can obtain the expectation of $\mu_\Phi^{-m}{\cal S}_{\rm a}^{-n}$ as 
\begin{align*}
S(n;m)&=\E\left[\mu_\Phi^{-m}{\cal S}_{\rm a}^{-n}\right],\\
&=\E\bigg[\mu_\Phi^{-m}\sum_{k=0}^\infty {n+k-1 \choose k} (1-\lambda_{\rm a})^k (1-\xi\mu_\Phi)^k\bigg],\\
&=\sum_{k=0}^\infty {n+k-1 \choose k} (1-\lambda_{\rm a})^k\sum_{l=0}^k (-1)^l{k \choose l} \xi^lM_{l-m}.
\end{align*} 
 Using this and \eqref{eq:Conditional_Mean_Peak_AoI_2}, we now derive the $b$-th moment of $\bar{A}_{\rm P}(\beta;\Phi)$ as
 \begin{align*}
 P_b&=\E\bigg[\bigg(\ncalZ_{\text{a}}+\frac{1}{\xi\mu_\Phi} + \frac{1}{\xi\mu_\Phi (1-\lambda_\text{a})+ \lambda_\text{a}}\bigg)^b\bigg]=\sum_{l+m+n=b}{b \choose l,m,n}\ncalZ_{\text{a}}^l\xi^{-m}\E\left[\mu_\Phi^{-m}{\cal S}_{\rm a}^{-n}\right].
 \end{align*}
Further, substituting  $\E\left[\mu_\Phi^{-m}{\cal S}_{\rm a}^{-n}\right]=S(n;m)$ in the above expression, we obtain \eqref{eq:TypeII_Qb}.
Since the moments of the upper bound  of the activity probabilities of interfering sources  (given in \eqref{eq:Activity_Moments_DomSys})  are used for evaluating the moments  of $\mu_\Phi$, the resulting $b$-th moment of $\bar{A}_{\rm P}(\beta;\Phi)$  in \eqref{eq:TypeII_Qb}  is in fact an upper bound.
\subsection{Proof of Corollary \ref{cor:AoI_M12_Q2}}
\label{app:AoI_M12_Q2}
 Using  $\E\left[\mu_\Phi^{-m}{\cal S}_{\rm a}^{-n}\right]=S(n;m)$ and \eqref{eq:Conditional_Mean_Peak_AoI_2},  the mean of $\bar{A}_{\rm P}(\beta;\Phi)$ can be determined as
\begin{align*}
P_1&={\cal Z}_\text{a}+\xi^{-1}\E_\Phi\left[\mu_\Phi^{-1}\right]+\E_\Phi\left[{\cal S}_{\rm a}^{-1}\right]={\cal Z}_\text{a}+\xi^{-1}M_{-1}+S(1;0).
\end{align*} 
Similarly, the second moment of $\bar{A}_{\rm P}(\beta;\Phi)$ can be determined as
\begin{align*}
P_2&=\E\left[\left(\ncalZ_{\text{a}}+({\xi\mu_\Phi})^{-1} + {\cal S}_{\rm a}^{-1}\right)^2\right],\\
&=\ncalZ_{\text{a}}^2 + 2\ncalZ_{\text{a}}\xi^{-1}M_{-1}+\xi^{-2}M_{-2} + 2{\cal Z}_{\rm a}  S(1;0)+ 2\xi^{-1}S(1;1)+S(2;0).
\end{align*}
Further, substitution of $M_{l}$  from Lemma \ref{lemma:moments_SuccessProb} provides the first two moments of  $\bar{A}_{\rm P}(\beta;\mu_\Phi)$ as in \eqref{eq:Mean_AoI_2} and \eqref{eq:M2_AoI_2}. 
The values of $C_{\zeta_o}(l)$ for $l\in\{-1,-2\}$ directly follow from  Corollary \ref{cor:AoI_M12_Q1}. However, the upper limit of  the summation in $C_{\zeta_o}(l)$ for $l>0$ reduces to $l$  (refer to Appendix \ref{app:moments_SuccessProb}).
Since the lower bounds of the moments of $\mu_\Phi$ obtained from the  two-step analysis are used here, the moments of  $\bar{A}_{\rm P}(\beta;\Phi)$ given in \eqref{eq:Mean_AoI_2} and \eqref{eq:M2_AoI_2} are the upper bounds (please refer to Section \ref{subsec:TwoStepAnalysis} for more details about the constructions and assumptions).



\end{document}